\newtheorem{theorem}{Theorem}[section]
\newtheorem{lemma}{Lemma}[section]
\newtheorem{proposition}{Proposition}[section]
\newtheorem{corollary}{Corollary}[section]
\newtheorem{claim}{Claim}
\newcommand{\lv}[1]{}
\newcommand{\COC}{{\sc Component Order Connectivity}}
\newcommand{\DCOC}{{\sc Directed Component Order Connecti\-vi\-ty}}
\newcommand{\COCell}{{\sc \COC[$\ell$]}}
\newcommand{\COCk}{{\sc \COC[$k$]}}
\newcommand{\COCellk}{{\sc \COC[$\ell+k$]}}
\newcommand{\COCnell}{{\sc \COC[$n-\ell$]}}
\newcommand{\DCOCell}{{\sc \DCOC[$\ell$]}}
\newcommand{\DCOCk}{{\sc \DCOC[$k$]}}
\newcommand{\DCOCellk}{{\sc \DCOC[$\ell+k$]}}
\newcommand{\DCOCnell}{{\sc \DCOC[$n-\ell$]}}
\newcommand{\mco}{{\rm mco}}
\newcommand{\cF}{{\mathcal F}}
\newcommand{\bigoh}{{\mathcal O}}
\newcommand{\jbj}[1]{{#1}}
\newcommand{\gzg}[1]{{#1}}
\newcommand{\mw}[1]{{#1}}
\newcommand{\ee}[1]{{#1}}
\newcommand{\viol}[1]{{#1}}
\newcommand{\gutin}[1]{{#1}}
\newcommand{\eiben}[1]{{#1}}
\newcommand{\defproblem}[4]{
  \vspace{3mm}
\noindent\fbox{
  \begin{minipage}{.95\textwidth}
  \begin{tabular*}{\textwidth}{@{\extracolsep{\fill}}lr} #1  & #2 \\ \end{tabular*}
  {\bf{Input:}} #3  \\
  {\bf{Question:}} #4
  \end{minipage}
  }
  \vspace{2mm}
}
\title{Component Order Connectivity in Directed Graphs} 
\author{
J\o{}rgen Bang-Jensen\thanks{University of Southern Denmark, Denmark.  {\tt jbj@imada.sdu.dk}. Research supported by the Independent Research Fund Denmark under grant number DFF 7014-00037B} \and Eduard Eiben\thanks{Royal Holloway, University of London, UK.  {\tt eduard.eiben@rhul.ac.uk}.} \and Gregory Gutin \thanks{Royal Holloway University of London, UK. {\tt g.gutin@rhul.ac.uk}.} \and Magnus Wahlstr{\"o}m\thanks{Royal Holloway, University of London, UK.  {\tt Magnus.Wahlstrom@rhul.ac.uk}.  }\and Anders Yeo\thanks{University of Southern Denmark, Denmark.  {\tt andersyeo@gmail.com}. Research supported by the Independent Research Fund Denmark under grant number DFF 7014-00037B} }
\begin{document}
 \maketitle

\begin{abstract}
A directed graph $D$ is semicomplete if for every pair $x,y$ of vertices of $D,$ there is at least one arc between $x$ and $y.$ \viol{Thus, a tournament is a semicomplete digraph.}
In the Directed Component Order Connectivity (DCOC) problem, given a digraph $D=(V,A)$ and a pair of natural numbers $k$ and $\ell$, we are to decide whether there is a subset $X$ of $V$ of size $k$ such that  the largest strong connectivity component in $D-X$ has at most $\ell$ vertices. Note that DCOC reduces to the Directed Feedback Vertex Set problem for $\ell=1.$
We study parametered complexity of DCOC for general and semicomplete digraphs with the following parameters: $k, \ell,\ell+k$ and $n-\ell$. 
In particular, we prove that DCOC with parameter $k$ on semicomplete digraphs can be solved in time $O^*(2^{16k})$ but not in time $O^*(2^{o(k)})$ unless the Exponential Time Hypothesis (ETH) fails.
\gutin{The upper bound $O^*(2^{16k})$ implies the upper bound $O^*(2^{16(n-\ell)})$ for the parameter $n-\ell.$ We complement the latter by showing that there is no algorithm
of  time complexity $O^*(2^{o({n-\ell})})$ unless ETH fails.}
Finally, we improve \viol{(in dependency on $\ell$)} the upper bound of G{\"{o}}ke, Marx and Mnich (2019) for the time complexity of DCOC with parameter $\ell+k$
on general digraphs from $O^*(2^{O(k\ell\log (k\ell))})$ to $O^*(2^{O(k\log (k\ell))}).$ Note that Drange, Dregi and van 't Hof (2016) proved that even for the undirected version of DCOC on split graphs
there is no algorithm of running time $O^*(2^{o(k\log \ell)})$ unless ETH fails and
it is a long-standing problem to decide whether Directed Feedback Vertex Set admits an algorithm of time complexity $O^*(2^{o(k\log k)}).$
\end{abstract}

\section{Introduction}

Motivated by various practical network applications, many different vulnerability measures of undirected graphs have
been introduced and studied in the literature. The two most studied of such measures are 
vertex and edge connectivity of an undirected graph. However, these two measures often do not capture the
more subtle vulnerability properties of networks that one might wish to consider, such as 
 the number of vertices in the largest remaining connected component.
 
 While both undirected and directed graphs are of great interest in graph theory and algorithms applications, undirected graphs have been studied
much more than their directed counterparts arguably due to simpler structure of undirected graphs. 
In this paper, we study a number of parameterizations of a problem of interest from
both theory and applications which was mainly studied for undirected graphs so far. 

In many networks, the underlying graph is directed rather than undirected and the aim of this paper is to study an 
extension to directed graphs of the {\bf $\ell$-component order connectivity} of an undirected graph $G$, which is the size of a minimum set $X\subseteq V(G)$  
such that $\mco(G-X)\le \ell,$ where $\mco(G-X)$ is the number of vertices in the largest connected component of $G-X$  (mco stands for
maximum component order). By \COC{} will denote the following decision problem:

\defproblem{{\sc component order connectivity}}{{}}{A graph $G=(V,E)$ and a pair $\ell,k\in{\mathbb N}$ of natural numbers}{
  Is there  a subset $X$ of $V$ of size $k$ such that $\mco(G-X)\le \ell.$}

For a survey on \COC, see Gross et al. \cite{gross13}; for more recent research on the problem, see e.g.
\cite{DrangeDH16,KumarL16,Lee19}. 
 
For a directed graph $D,$  we define the {\bf $\ell$-component order connectivity} as the size of a minimum set $X\subseteq V(D)$  
such that $\mco(D-X)\le \ell,$ where $\mco(D-X)$ is the number of vertices in the largest strongly connected component of $D-X.$ 
Using this definition of $\mco(D-X),$ we state can state the following directed version of  \COC{}.

\defproblem{{\sc directed component order connectivity}}{{}}{A digraph $D=(V,A)$ and a pair $\ell,k\in{\mathbb N}$ of natural numbers}{
  Is there  a subset $X$ of $V$ of size $k$ such that $\mco(D-X)\le \ell.$}
  
 \gzg{In what follows, we will assume without loss of generality that $k+\ell< n=|V|$ (or, $k< n-\ell$). Indeed, if $k+\ell\ge n$ then our instance is a YES-instance since deleting any set $X$ of $k$ vertices
implies  $\mco(D-X)\le \ell.$
}

Clearly, \DCOC{} is a generalization of \COC{} (each instance $(G,\ell,k)$ of \COC{} corresponds to an equivalent instance $(D,\ell,k)$ of \DCOC, where $D$ is obtained from 
$G$ by replacing every edge of $G$ by a directed 2-cycle). For $\ell=1,$ while \COC{} is equivalent to the {\sc Vertex Cover} problem,
 \DCOC{} is equivalent to the {\sc Directed Feedback Vertex Set} problem. Unlike  {\sc Vertex Cover} whose fixed-parameter tractability is very easy to show, \jbj{a fact 
that was known} very early on in parameterized algorithmics \cite{DowneyF99}, fixed-parameter tractability of {\sc Directed Feedback Vertex Set} 
 was a long-standing open problem until Chen et al. \cite{chenJACM55} in 2008 proved its fixed-parameter tractability by designing a $4^kk!n^{\bigoh(1)}$-time algorithm.
 We provide basics on parameterized algorithms and complexity in the next section.

Since \COC{} is NP-complete (it remains NP-complete even for split, co-bipartite and chordal undirected graphs \cite{DrangeDH16}), a number of researchers
studied \COC{} using the framework of parameterized algorithmics, see e.g. \cite{DrangeDH16,KumarL16,Lee19}.
G\"oke, Marx and Mnich~\cite{GokeMM20arXiv} were the first to study the \DCOC{}  problem  \jbj{from the viewpoint of } parameterized algorithms and complexity. 
They obtained an algorithm of 
running time $4^k(k\ell+k+\ell)!n^{\bigoh(1)},$ which is close to the complexity of the algorithm of Chen et al. \cite{chenJACM55} when $\ell=1$.
Thus,  \DCOC{} parameterized by $k+\ell$ is fixed-parameter tractable (FPT).

We will continue the study of \DCOC{} using parameterized algorithms and complexity. 
In particular, as in papers \cite{DrangeDH16,KumarL16,Lee19} which studied \COC,
we study  \DCOC{} parameterized by three parameters: $\ell$, $k$ and $\ell+k.$ We will denote the corresponding
parameterized problems by \DCOCell, \DCOCk{} and \DCOCellk, respectively.

Moreover, we introduce and study a new parameterization of \DCOC: parameter $n-\ell,$ where $n$ is the number of vertices in $D.$ 
One reason to introduce \DCOCnell{} is that normally one requires the parameters to be relatively small compaired to the size of the problem under 
consideration. However, if $k$ is small it is possible that for every $X\subseteq V(D)$  of size $k$, $\mco(D-X)$ is not much smaller than $n-k.$ 
Then  $n-\ell$ can be much smaller than $\ell.$ 

Since  \COC{} is equivalent to the {\sc Vertex Cover} problem for $\ell=1$, \COCell{} is para-NP-complete. 
\gzg{Drange et al. \cite[Theorem 8]{DrangeDH16} proved that \COCk{} is W[1]-hard even on split graphs. In their construction, 
$n-\ell=\bigoh(k^2).$} Hence, \COCnell{}  is also W[1]-hard. 
They also showed that \COCellk{} is FPT by obtaining an algorithm of running time $2^{\bigoh(k\log \ell)}n.$ 
The above mentioned results are written in the undirected graphs row of Table \ref{fig:res}. 

A directed graph $D$ is {\bf semicomplete} if for every pair $x,y$ of distinct vertices of $D$, there is an arc between $x$ and $y.$ 
When we require that there is only one arc between $x$ and $y$ then we obtain a definition of a {\bf tournament}. 
Clearly, the hardness results for the directed graphs row of Table \ref{fig:res} follow from the corresponding results in the undirected graphs row for columns $n-\ell$ and $k$.
\DCOCell{} is para-NP-complete for semicomplete digraphs  as \DCOC{} on semicomplete digraphs
is NP-complete for $\ell=1$. \jbj{This follows from the fact that} {\sc Directed Feedback Vertex Set} is NP-complete
even for tournaments, as proved by Bang-Jensen and Thomassen \cite{Bang-JensenT92} and Speckenmeyer \cite{Speckenmeyer89}. 

The FPT result in directed graphs row of Table \ref{fig:res} is first obtained by G\"oke et al.~\cite{GokeMM20arXiv} as discussed above. The running time of their algorithm
is $
  4^k(k\ell+k+\ell)!n^{\bigoh(1)} = 2^{\bigoh(k \ell \log(k \ell))} n^{\bigoh(1)}.
$
By modifying their algorithm, we obtained an algorithm of complexity  $2^{\bigoh(k)} \ell^k  k! n^{\bigoh(1)}=2^{\bigoh(k \log(k \ell))} n^{\bigoh(1)},$ which decreases asymptotic 
dependence of the running time on $\ell.$\footnote{We obtained this result independently from \cite{Neogi+}; our approach is different from that in \cite{Neogi+}.}
Our modification consists of replacing a branching algorithm in \cite{GokeMM20arXiv} with a randomized algorithm which can be derandomized without increasing the complexity upper bound. 
\gutin{Note that Drange et al. \cite[Theorem 14]{DrangeDH16} proved that even for \COC{} on split graphs
there is no algorithm of running time $O^*(2^{o(k\log \ell)})$ (\viol{here we assume that $\ell=O(k^{O(1)})$}) unless the Exponential Time Hypothesis (ETH)~\cite{ImpagliazzoRF2001} fails and
it is a long-standing problem to decide whether {\sc Directed Feedback Vertex Set} admits an algorithm of time complexity $O^*(2^{o(k\log k)}).$}

\begin{table}[H]\label{tab}
\caption{Parameterized Complexity of ({\sc Directed}) \COC}
\begin{center}
\begin{tabular}{|c|c|c|c|c|} \hline
class of graphs & $n-\ell$ & $k$ & $\ell$ & $\ell+k$\\ \hline
semicomplete digraphs & FPT & FPT & para-NP-c. & FPT\\
undirected graphs & W[1]-hard&W[1]-hard& para-NP-c. & FPT\\
directed graphs & W[1]-hard&W[1]-hard& para-NP-c. & FPT\\
\hline 
\end{tabular}
\end{center}
\label{fig:res}
\end{table}%

The most interesting entry in the semicomplete digraphs row is a non-trivial result that \DCOCk{} on semicomplete digraphs
is FPT. This FPT algorithm boils down to finding a shortest path in a suitably defined  auxiliary weighted acyclic digraph.
The running time of the algorithm is $\bigoh(2^{16k}kn^2).$ 
The other two FPT entries in this row follow from this result (for the parameter $n-\ell$ this is due to our assumption that $k< n-\ell$).
We also prove the following lower bounds: no algorithm for  \DCOCk{} on semicomplete digraphs
can have time complexity $2^{o(k)}n^{\bigoh(1)}$ unless ETH fails\footnote{\viol{Similarly, no algorithm for \DCOCnell{} on semicomplete digraphs can have running time $2^{o(n-\ell)} n^{\bigoh(1)}$, unless ETH fails. }} and no such deterministic algorithm can run in time $o(n^2).$  


Our paper is organised as follows. The next section is devoted to terminology and notation on directed and undirected graphs, and basics on parameterized algorithms and complexity. 
In Section \ref{sec:gd}, we describe our improvement on the algorithm of G\"oke et al.~\cite{GokeMM20arXiv}. 
In Section \ref{sec:sd}, we prove that \DCOCk{} on semicomplete digraphs
admits an algorithm of running time $\bigoh^*(2^{16k})$ \gzg{and show the lower bounds on running time with parameters $k$ and $n-\ell$.}
We conclude the paper in Section \ref{sec:c}.


\section{Preliminaries}

\subsection{Directed and Undirected Graph Terminology and Notation}\label{ssec:pre}

In this paper, all directed and undirected graphs are finite, without loops or parallel edges.
As often the case in the directed graph theory, an edge of a digraph will be called an {\bf arc} 
and the vertex and arc sets of a digraph $D$ will be denoted by $V(D)$ and $A(D),$ respectively.
The {\bf out-neighbourhood} and {\bf in-neighbourhood} of a vertex $x$ of a digraph $D$
are denoted by $N^+_D(x)=\{y\in V(D):\ xy\in A(D)\}$ and $N^-_D(x)=\{y\in V(D):\ yx\in A(D)\}$, respectively, and the subscript $D$ will be 
omitted if $D$ is clear from the context. The {\bf out-degree} and {\bf in-degree}
of a vertex $x$ of $D$ is $d^+_D(x)=|N^+_D(x)|$ and $d^-_D(x)=|N^-_D(x)|,$ respectively.

In this paper all paths and cycles in digraphs  are directed, so we will omit
the adjective `directed' when referring to paths and cycles in digraphs.
\jbj{If $D=(V,A)$ is a digraph and $S\subseteq V$, then we denote by $D[S]$ the subdigraph induced by the vertices in $S$.}
A digraph $D$ is {\bf strongly connected} (or, just {\bf strong}) if there is a path 
from $x$ to $y$ for every ordered pair $x,y$ of distinct vertices.
A {\bf strong component} of a digraph $D$ is a maximal strong induced subgraph of $D.$ 
Strong components of $D$ do not share vertices and
can be ordered $D_1,D_2,\dots ,D_p$ such that there is no arc \jbj{in $D$ from $V(D_j)$ to $V(D_i)$ when} $j>i.$ 
Such an ordering is called an {\bf acyclic ordering}. 
Note that if $D$ is a semicomplete digraph, then \jbj{the strong components of $D$ have} a unique acyclic ordering $D_1,D_2,\dots ,D_p$ \jbj{and we have} $xy\in A(D)$ 
for every $x\in V(D_i),\ y\in V(D_j),\ i<j.$

Basic digraph terminology not introduced in this section can be found in
\cite{bang2009,bang2018}.

\subsection{Parameterized Complexity}

An instance of a parameterized problem $\Pi$
is a pair $(I,k)$ where $I$ is the {\bf main part} and $k$ is the
{\bf parameter}; the latter is usually a non-negative integer.  
A parameterized problem is
{\bf fixed-parameter tractable} (FPT) if there exists a computable function
$f$ such that instances $(I,k)$ can be solved in time $\bigoh(f(k)|{I}|^c)$
where $|I|$ denotes the size of~$I$ and $c$ is an absolute constant. The class of all fixed-parameter
tractable decision problems is called {FPT} and algorithms which run in
the time specified above are called {FPT} algorithms. As in other literature on {FPT} algorithms,
we will sometimes omit the polynomial factor in $\bigoh(f(k)|{I}|^c)$ and write $\bigoh^*(f(k))$ instead.

\gutin{
While FPT is a parameterized complexity analog of {\sf P} in classic complexity, there are many hardness classes in parameterized 
complexity and they form a nested sequence starting from W[1]. It is well known that if the Exponential Time Hypothesis holds then FPT$\ne$W[1].
Due to this and other complexity results,  it is widely believed that FPT$\ne$W[1] and hence W[1] is viewed as a parameterized analog of {\sf NP} in classical complexity.
}


{\bf para-NP} is the class of parameterized problems which can be solved by a nondeterministic algorithm in time 
$\bigoh(f(k)|{I}|^c),$ where $f$ is a computable function and $c$ is an absolute constant.
It is well-known that if a problem $\Pi$ with parameter $\kappa$ is NP-complete when $\kappa$ equals
 to some constant, then $\Pi$ is para-NP-complete. It is also well known that FPT=para-NP if and only if P=NP.

For more information on parameterized algorithms and complexity, see recent books \cite{cygan2015,downey2013}.

\section{\DCOCellk{} on General Digraphs}\label{sec:gd}

G\"oke, Marx and Mnich~\cite{GokeMM20arXiv} showed that \DCOCellk{} is
FPT with a running time given as
\[
  4^k(k\ell+k+\ell)!n^{\bigoh(1)} = 2^{\bigoh(k \ell \log(k \ell))} n^{\bigoh(1)}.
\]
The core of their algorithm is as follows.  Begin with the
\emph{iterative compression} version of the problem, where in addition
to $(D, \ell, k)$ the input also contains a solution $X_0$ with $|X_0|=k+1$,
which can be used to guide the search for a smaller solution.
This is a standard ingredient in FPT algorithms; see, e.g.,~\cite{cygan2015}.
At the cost of a simple branching step, we may also assume that we are
looking for a solution $X$ with $X \cap X_0 = \emptyset$.
Next, they observe that \emph{if} we knew the strongly connected
components of $D-X$ that the vertices of $X_0$ are contained in,
then the problem reduces to a previously studied, simpler problem
known as \textsc{Skew Separator}~\cite{chenJACM55}, which occurs in the design of the
FPT algorithm for \textsc{Directed Feedback Vertex Set} (DFVS) of Chen
et al.~\cite{chenJACM55}.  Indeed, if the precise strong components containing the vertices of
$X_0$ are known, then the problem can be solved in time $O^*(4^kk!)$
using a strategy much like that for DFVS~\cite{chenJACM55,GokeMM20arXiv}.
Hence the bottleneck in \DCOCellk{} is the guessing of the strong
components of $X_0$ in $D-X$.

G\"oke et al.~\cite{GokeMM20arXiv} solve this via a branching
algorithm that they analyse as taking time at most $(k\ell + k + \ell)!$.
We show a simpler randomized method solving this problem with an
improved time bound of
\begin{equation}\label{eq:1}
\binom{\ell (k+1) + k}{k} \leq (e (\ell + 1+\ell/k))^k\eiben{\le (3e\ell)^k=2^{\bigoh(k)}\cdot \ell^k}
\end{equation}
The method can be derandomized by standard means.

\begin{lemma} \label{lemma:guess-components}
  Let $(D, \ell, k)$ be an instance of \DCOCellk, and let $X_0$ be a
  solution with $|X_0| = k+1$.  Let $X$ be an unknown solution with $|X| \leq k$
  \mw{such that $X \cap X_0 = \emptyset$}.
  There is a randomized procedure that with success probability at least 
  \[
    \left((\ell + k)^{\bigoh(1)} \binom{\ell k + \ell + k}{k} \right)^{-1}
  \]
  computes a set $S \subset V(D)$ such that for every $x \in X_0$,
  the strong components containing $x$ in $D-X$ and in $D[S]$ are identical.
\end{lemma}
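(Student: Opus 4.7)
The plan is to identify a small set $C \subseteq V(D)$ that any suitable $S$ must contain, and then to obtain $S$ by independent random sampling. Define $C$ to be the union of those strong components of $D-X$ that contain at least one vertex of $X_0$. Since $X \cap X_0 = \emptyset$ we have $X_0 \subseteq C$; since $C$ is a union of strong components of $D-X$, we have $C \cap X = \emptyset$; and since each strong component of $D-X$ has at most $\ell$ vertices while $|X_0|=k+1$, we obtain $|C| \leq \ell(k+1)$.

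The structural observation driving the argument is that any $S \subseteq V(D)$ satisfying $C \subseteq S$ and $S \cap X = \emptyset$ already yields the conclusion of the lemma. Indeed, $S \cap X = \emptyset$ gives $S \subseteq V(D)\setminus X$, so $D[S]$ is an induced subdigraph of $D-X$. For every $x \in X_0$, its strong component $C_x$ in $D-X$ lies entirely in $C \subseteq S$, so $D[C_x]$ is a strong subdigraph of $D[S]$; since $C_x$ is maximal strong already in the larger digraph $D-X$, it remains maximal in the induced subdigraph $D[S]$, and hence is exactly the strong component of $x$ in $D[S]$.

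The randomised procedure is then: include each vertex of $V(D)$ in $S$ independently with probability $p := \ell(k+1)/(\ell(k+1)+k)$. Writing $a = |C|$ and $b = |X|$, the success probability of the event $\{C \subseteq S\} \cap \{S \cap X = \emptyset\}$ equals $p^{a}(1-p)^{b}$. Since $0 < p, 1-p < 1$ and $a \leq \ell(k+1)$, $b \leq k$, this is at least $p^{\ell(k+1)}(1-p)^{k}$, which by a standard Stirling estimate applied to $\binom{\ell(k+1)+k}{k} = \binom{\ell k + \ell + k}{k}$ is bounded below by $\left((\ell+k)^{\bigoh(1)} \binom{\ell k + \ell + k}{k}\right)^{-1}$, matching the claimed bound. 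Crucially, $p$ depends only on $\ell$ and $k$, not on the unknown quantities $X$, $C_x$ or $|C|$, so the procedure is oblivious to the solution. The main conceptual step is the structural observation of the second paragraph; the remainder is a routine sampling-plus-Stirling calculation, and the derandomisation can be handled by standard splitter/perfect-hash-family machinery without increasing the stated bound.
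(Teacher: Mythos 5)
Your proof is correct and follows the same fundamental approach as the paper's: independent Bernoulli sampling of vertices into $S$, together with the structural observation that it suffices to have $C \subseteq S$ and $S \cap X = \emptyset$, where $C$ is the union of the strong components of $D-X$ meeting $X_0$. The only genuine difference is in how the success probability is bounded. You pick the probability $p=\ell(k+1)/(\ell(k+1)+k)$ that directly maximizes $p^{\ell(k+1)}(1-p)^k$ and invoke Stirling to relate the resulting expression $\frac{r^r s^s}{(r+s)^{r+s}}$ (with $r=\ell(k+1)$, $s=k$) to $1/\binom{r+s}{s}$ up to a $\mathrm{poly}(\ell+k)$ factor. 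The paper instead uses $p=1-1/(\ell+1)$, fixes $X_0\subseteq S$ deterministically, and runs a two-step argument: first it shows (via concentration of the binomial, since the expected number of excluded vertices of $S_0=X\cup Y$ is within $1$ of the target $k$) that with probability at least inverse-polynomial exactly $k$ vertices of $S_0$ are excluded, and then notes that conditional on this the excluded $k$-subset is uniform, giving a further factor $1/\binom{\ell k+\ell+k}{k}$. Both routes give the stated bound; your direct-Stirling computation is a bit more elementary, while the paper's conditioning argument avoids any explicit asymptotic manipulation. One small point: you assert the Stirling step without carrying it out; it does hold (one gets $p^r(1-p)^s=\Theta\bigl(\sqrt{(r+s)/(rs)}\bigr)/\binom{r+s}{s}$, and $r+s=O((\ell+k)^2)$), but it would be worth a line or two to make this explicit.
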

\begin{proof}
  Initialize $S=X_0$, then for every vertex $v \in V(D) \setminus X_0$
  place $v$ in $S$ independently at random with probability $p=1-1/(\ell+1)$.
  We declare a guess a \emph{success} if the following conditions apply:
  \begin{enumerate}
  \item For every $x \in X_0$ we have $V_x \subseteq S$, 
    where $V_x \subseteq V$ is the strong component of $D-X$
    containing $x$
  \item $X \cap S = \emptyset$
  \end{enumerate}
 
  Let $Y=\bigcup_{x \in X_0} V_x$.  Our guess is successful if and
  only if $v \in S$ for every $v \in Y$, and $v \notin S$ for every $v \in X$. 
  Since these are independent events, this clearly happens with probability precisely
  \[
    p^{|Y|}(1-p)^{|X|} \geq p^{\ell(k+1)}(1-p)^k,
  \]
  hence the worst case occurs when all sets $V_x$ are disjoint and
  have $|V_x|=\ell$, and $|X|=k$, i.e., $|Y|=\ell(k+1)$ and $|X|=k$. 
  Let $S_0=X \cup Y$.  We bound the probability of success carefully
  in two steps:
  \begin{enumerate}
  \item We estimate the probability that $|S \cap S_0|=|Y|$,
    \emph{without} caring about the precise intersection (i.e.,
    success in this stage includes cases where $X \cap S \neq \emptyset$).
  \item We estimate the probability of success, conditional on the
    previous event. 
  \end{enumerate}
  Note $|S_0|=\ell(k+1)+k$ by assumption. 

  For the first step, note that the expected number of vertices of
  $S_0$ \emph{not} in $S$ is 
  \[
    (1-p)|S_0|=(1/(\ell+1))(\ell k+\ell+k) = k + \frac{\ell}{\ell+1}.
  \]
  Also note that in a successful guess, this value is precisely $k$.
  Hence the expected value differs from the intended value by less
  than 1. Since $|S \cap S_0|$ is a binomial distribution, due to the
  guesses being independent, this clearly happens with probability 
  at least inverse polynomial in $k+\ell$.

  Subject to this event, the set $S_0 \setminus S$ is uniformly
  distributed among all subsets of $S_0$ of size $k$ by independence,
  hence the conditional probability of success is one in 
  $\binom{\ell k + \ell + k}{k}$.  We conclude that the success
  probability matches the bound in the lemma.

  Finally, assume that the guess was successful for some set $S$
  and consider the strong component of $x$ in $D[S]$ for some $x \in X_0$.
  Let $V_x'$ be this strong component. Since $D[V_x]$ is strongly connected
  and $V_x \subseteq S$, we have $V_x \subseteq V_x'$.  On the other hand,
  by assumption $D[S]$ is an induced subgraph of $D-X$, and since
  $V_x$ is a strongly connected component in $D-X$ we must have $V_x' \subseteq V_x$.
  We conclude $V_x=V_x'$ for each $x \in X_0$, as required. 
\end{proof}

For the derandomization, we employ a {\bf cover-free family}
construction of Bshouty and Gabizon~\cite{BshoutyG17CIAC}.
We get the following:

\begin{lemma} \label{lemma:components-derand}
  There is a deterministic procedure that produces a set $\cF \subseteq 2^V$
  with
  \[
    |\cF| = \binom{\ell k + \ell  + k}{k}^{1+o(1)} \log |V|
  \]
  in time $\bigoh(|\cF| n)$, such that there is a set $S \in \cF$
  such that for every $x \in X_0$,
  the strong components containing $x$ in $D-X$ and in $D[S]$ are identical.
\end{lemma}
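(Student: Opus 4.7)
Plan. The idea is to replace the random sampling in Lemma~\ref{lemma:guess-components} by a deterministic enumeration of candidate sets $S$, using a cover-free family (CFF) construction. Recall that in the randomized proof, conditioned on a ``successful guess'' we have $Y \subseteq S$ and $X \cap S = \emptyset$, where $Y = \bigcup_{x \in X_0} V_x$ has size at most $\ell(k+1)$ and $|X| \leq k$, and both sets are disjoint subsets of $V \setminus X_0$. Thus what we really need is: a family $\cF'$ of subsets of $V \setminus X_0$ such that for every pair of disjoint sets $A, B \subseteq V \setminus X_0$ with $|A| \leq k$ and $|B| \leq \ell(k+1)$, there exists $F \in \cF'$ with $B \subseteq F$ and $A \cap F = \emptyset$. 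Taking $A = X$ and $B = Y$, such an $F$ yields a set $S = X_0 \cup F$ with the desired two properties.

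First I would recall the Bshouty--Gabizon construction~\cite{BshoutyG17CIAC}, which produces exactly such a family (a $(k, \ell(k+1))$-cover-free family, up to relabelling/complementing) on a ground set of size $n$ in deterministic time $\bigoh(|\cF'| \cdot n)$, with
\[
  |\cF'| = \binom{k + \ell(k+1)}{k}^{1+o(1)} \log n = \binom{\ell k + \ell + k}{k}^{1+o(1)} \log n.
\]
I then set $\cF = \{\, X_0 \cup F : F \in \cF' \,\}$. This has the same size and can be computed in the same time (up to constants).

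Next I verify correctness: take an arbitrary solution $X$ with $|X| \leq k$ and $X \cap X_0 = \emptyset$, and let $Y = \bigcup_{x\in X_0} V_x$ where $V_x$ is the strong component of $x$ in $D-X$. Then $X$ and $Y$ are disjoint subsets of $V \setminus X_0$ with $|X| \leq k$ and $|Y| \leq \ell(k+1)$; by the cover-free property there exists $F \in \cF'$ with $Y \subseteq F$ and $X \cap F = \emptyset$, so $S := X_0 \cup F \in \cF$ satisfies $Y \subseteq S$ and $X \cap S = \emptyset$. The final paragraph of the proof of Lemma~\ref{lemma:guess-components} (comparing the strong component of $x$ in $D[S]$ with that in $D-X$, using $V_x \subseteq S$ and $S \subseteq V \setminus X$) then applies verbatim to conclude that the strong components of each $x \in X_0$ in $D[S]$ and in $D-X$ coincide.

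The main (minor) obstacle is simply matching the parameters of the Bshouty--Gabizon CFF to the setting here; once the combinatorial identification $(a,b) = (k, \ell(k+1))$ is made, the construction is black-box and the rest of the argument is an unchanged copy of the randomized analysis. No new structural insight about digraphs is needed beyond what is already in Lemma~\ref{lemma:guess-components}.
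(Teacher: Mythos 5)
Your proposal is correct and takes essentially the same approach as the paper: both replace the random sampling in Lemma~\ref{lemma:guess-components} by the deterministic Bshouty--Gabizon cover-free family construction with parameters $r=\ell(k+1)$ and $s=k$ (handling the $r>s$ asymmetry by complementing), and then invoke the final paragraph of the randomized proof to conclude. Your treatment of $X_0$ (building the family on $V\setminus X_0$ and adding $X_0$ back) is slightly more explicit than the paper's, but the argument is the same.
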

\begin{proof}
  Let $r \leq s < n$ be integers. 
  Bshouty and Gabizon (in a slightly non-standard definition)
  define an {\bf $(n,(r,s))$-cover free family} as a set $\cF \subseteq \{0,1\}^n$
  such that for every disjoint pair of sets $A, B \subseteq [n]$
  with $|A|=r$ and $|B|=s$ there is a set $S \in \cF$
  such that $A \subseteq S$ and $B \cap S = \emptyset$.
  Bshouty and Gabizon~\cite{BshoutyG17CIAC} show how to compute an
  $(n,(r,s))$-cover free family $\cF$ of size
  \[
    |\cF| = \binom{r+s}{r}^{1+o(1)} \log n
  \]
  in time $\bigoh(|\cF| n)$. 
  
  By Lemma~\ref{lemma:guess-components}, it suffices to construct a
  cover-free family with parameters $n=|V(D)|$, $r=\ell(k+1)$
  and $s=k$. Here $r>s$, but we can simply compute an
  $(n,(s,r))$-cover free family and take the complement of every member.
  Hence we get a family of size
  \[
    \binom{\ell k+\ell+k}{k}^{1+o(1)}\log n
  \]
  computed in output-linear time. 
\end{proof}

The two lemmas of this section \viol{and (\ref{eq:1})} imply the following:

\begin{theorem}
  There is a randomized FPT algorithm that solves \DCOCellk{} in time
  $2^{\bigoh(k)} \ell^k  k! n^{\bigoh(1)}$ with probability at least $\Omega(1)$.
  The algorithm can be derandomized in the same time, up to a
  lower-order overhead factor. 
\end{theorem}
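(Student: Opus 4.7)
The plan is to combine Lemmas~\ref{lemma:guess-components} and~\ref{lemma:components-derand} with the iterative-compression framework of G\"oke, Marx and Mnich~\cite{GokeMM20arXiv}. First I would apply standard iterative compression: process the vertices of $D$ one by one, at each step maintaining a solution of size at most $k+1$ for the current prefix and attempting to compress it down to $k$. The compression subroutine receives $X_0$ with $|X_0|=k+1$ and seeks $X$ with $|X|\le k$; branching on the intersection $X\cap X_0$ (removing the guessed part from $D$ and decreasing $k$ accordingly) costs a factor $2^{k+1}$ and reduces us to the disjoint variant already assumed by Lemma~\ref{lemma:guess-components}.

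Second, I would enumerate candidate sets $S$ via the derandomized cover-free family $\cF$ of Lemma~\ref{lemma:components-derand}, at a cost of $|\cF| = 2^{\bigoh(k)}\ell^k \log n$ by~(\ref{eq:1}). By the lemma, some $S \in \cF$ will have the property that for every $x \in X_0$ the strong component of $x$ in $D[S]$ coincides with its strong component in $D-X$. For the randomized version, a single sample from Lemma~\ref{lemma:guess-components} succeeds with probability at least $1/(2^{\bigoh(k)}\ell^k)$ up to a polynomial factor; repeating that many times brings the overall success probability to $\Omega(1)$ within the same time bound.

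Third, given a correct candidate $S$, I would reduce the residual problem to \textsc{Skew Separator} in the manner of Chen et al.~\cite{chenJACM55} and G\"oke et al.~\cite{GokeMM20arXiv}: use the acyclic ordering of the strong components of $D[S]$ that contain $X_0$-vertices to identify the ``heavy'' components $V_x$; require that every remaining strong component of $D-X$ has order at most $\ell$; and formulate a skew multiway cut whose minimum value equals $|X|$. The classical $O^*(4^k k!)$ algorithm of Chen et al.\ then resolves each instance.

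Multiplying the iterative-compression factor ($n$), the subset-branching factor ($2^{k+1}$), the family size ($2^{\bigoh(k)}\ell^k$), and the skew-separator factor ($4^k k!$) yields the claimed bound $2^{\bigoh(k)}\ell^k k!\, n^{\bigoh(1)}$. I expect the only delicate point to be the reduction to \textsc{Skew Separator} itself: one has to verify that, when $S$ is guessed correctly, the acyclic order of the $X_0$-containing strong components of $D[S]$ is consistent with that of $D-X$, and that the ``forbidden'' and ``small'' component constraints can be jointly captured as a skew multiway cut. Fortunately this is exactly the setup already carried out in~\cite{GokeMM20arXiv}, so the new content here is concentrated in the improved component-guessing of Lemma~\ref{lemma:guess-components}, which replaces their $(k\ell+k+\ell)!$ branching factor by $2^{\bigoh(k)}\ell^k$.
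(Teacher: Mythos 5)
Your proposal follows essentially the same approach as the paper: iterative compression with branching on $X\cap X_0$ to reach the disjoint variant, replace the $(k\ell+k+\ell)!$ branching step of G\"oke et al.\ by the cover-free-family (or repeated-sampling) component-guessing of Lemmas~\ref{lemma:guess-components} and~\ref{lemma:components-derand}, and then hand the residual problem to the \textsc{Skew Separator} reduction and the $O^*(4^k k!)$ routine of Chen et al.\ exactly as in~\cite{GokeMM20arXiv}. The paper states the theorem as an immediate consequence of the two lemmas, eq.~(\ref{eq:1}), and the framework summarized at the start of Section~\ref{sec:gd}, and your accounting of the factors ($n$ from compression, $2^{k+1}$ from subset branching, $2^{\bigoh(k)}\ell^k$ from the family/repetitions, $4^k k!$ from skew separator) is correct.
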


\section{\DCOC{} on Semicomplete Digraphs}\label{sec:sd}
\eiben{Let us first summarize main ideas behind our FPT algorithm, before providing more technical details.}
 Let $D=(V,A)$ be a semicomplete digraph, $k,\ell\in \mathbb{N}$ and let $X\subseteq V$ of size $k$ such that $\mco(D-X)\le \ell$. The vertices of $D-X$ can be partitioned into  $C_1,\ldots, C_q$ such that \jbj{each $C_i$ is the vertex set of a strong component of $D-X$ and }
\begin{enumerate}
	\item for every $i\in [q]$ is  $|C_i|\le \ell$, and
	\item for every $i,j\in [q]$ with $i<j$ and every $x\in C_i$, $y\in C_j$ we have $xy\in A$ and $yx\notin A$. 
\end{enumerate}
\begin{figure}
	\centering
	\includegraphics{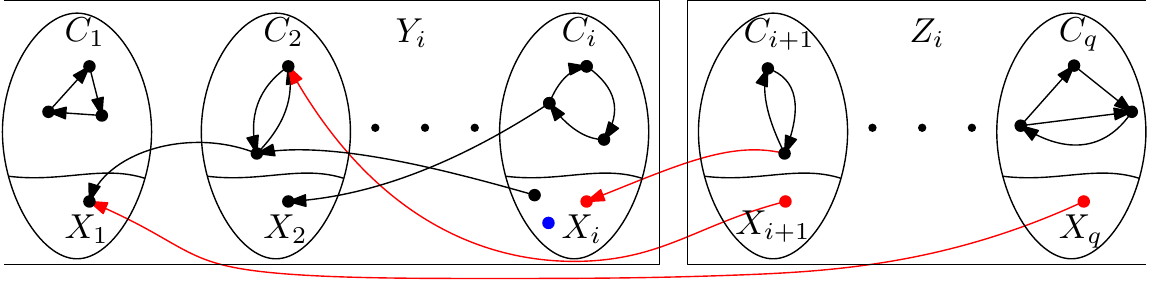}
	\caption{\eiben{An example of a valid triple $(Y_i, Z_i, S_i)$. A semicomplete digraph $D$, the set $X=\bigcup_{i\in [q]}X_i$ is such that $\mco(D-X)=3$ and $C_1,\ldots, C_q$ are strong components of $D-X$.  $Y_i=C'_1\cup C'_2\cup\cdots\cup C'_i$ and $Z_i=C'_{i+1}\cup C'_{i+2}\cup\cdots\cup C'_q$, where $C'_i=C_i\cup X_i$, $i\in [q]$. The arcs $uv$, $u\in C'_i$, $v\in C'_j$ for $i<j$ are omitted as well as the arcs between $X_i$ and $C_i$. The set $S_i$ is the set of the three red vertices, one in each of $X_i$, $X_{i+1}$, and $X_q$, is a minimal vertex cover of the red arcs from $Z_i$ to $Y_i$. Note that the vertex in $X_1$ is not in $S_i$ as the arc incident to it with the tail in $Z_i$ is already covered by $S_i$. Note also the blue vertex in $X_i$, the only reason it is in $X$ is to reduce the size of $C_i$ and as such it will not appear in any $S_j$, $j\in [q]$, in the set of $q$ valid triples defining these components.}}\label{fig:valid_triples}
\end{figure}
\eiben{In our algorithm, we would like to discover the strong components one by one in the ascending order from $C_1$ to $C_q$. Now let $X_1,\ldots, X_q$ be a partition of $X$ into $q$ (possibly empty) parts and let, for each $i\in [q]$,  $Y_i=C'_1\cup C'_2\cup\cdots\cup C'_i$ and $Z_i=C'_{i+1}\cup C'_{i+2}\cup\cdots\cup C'_q$, where $C'_i=C_i\cup X_i$, $i\in [q]$. Moreover, let $S_i$ be a subset of $X$ such that for each $y\in Y_i\setminus S_i$ and $z\in Z_i\setminus S_i$ we have $yz\in A$ and $zy\notin A$. See also Figure~\ref{fig:valid_triples}. Note that, given $S_i$, it suffice to solve our problem in subgraphs $D[Y_i\setminus S_i]$ and $D[Z_i\setminus S_i]$ separately. Moreover, the set $(Y_{i+1}\setminus Y_{i}) \setminus (S_{i+1}\cup S_i)$ is basically the strong component $C_{i+1}$ up to few vertices in $X_{i+1}$ that are not incident to any arc with tail in $Z_{i+1}\setminus S_{i+1}$ or head in $Y_i\setminus S_i$. Such vertices can actually be replaced in $X$ by any vertex in $C_{i+1}$. It follows, that if we are given $(Y_1,Z_1,S_1),\ldots , (Y_q,Z_q,S_q)$, then we can easily reconstruct a solution of size $|X|$ as $\bigcup_{i\in [q]}S_i$ plus some arbitrary vertices of $(Y_{i+1}\setminus Y_{i}) \setminus (S_{i+1}\cup S_i)$} 
\viol{to have at most  $\ell$ vertices in each strong component of $D-X$.}

 \eiben{Therefore, our goal will be to search for triples $(Y_i, Z_i, S_i)$, $i\in [q]$, where $\{Y_i,Z_i\}$ is a partition of $V$ and $S_i$ is a minimal subset of $X$ such that there is no arc $zy$ in $A$ with $z\in Z_i\setminus S_i$ and $y\in Y_i\setminus S_i$. 
	The first step of our proof is to show that there are at most $2^{8k+2}n$ triples we need to consider (Lemma~\ref{lemma:numberOfTripples}). We will call these important triples \textbf{valid} and we postpone the precise definition for later. The main reason for the bound is that we only need to consider triples $(Y_i, Z_i, S_i)$ for which $|S_i|\le k$ and that if we fix $|Y_i|$ (and hence also $|Z_i|$), then vertices with out-degree at least $|Z_i|+|S_i|+1$ (resp. in-degree at least $|Y_i|+|S_i|+1$) have to be in $Y_i$ (resp. in $Z_i$) or in $S_i$ and we can fix these vertices in $Y_i$ (resp. in $Z_i$). Once we bound the number of the triples we need to consider, we can define \textbf{compatible} pairs of triples $\left((Y^1, Z^1, S^1), (Y^2, Z^2, S^2)\right)$, for which $Y^1\subset Y^2$ and these triples, loosely speaking can define a strong component of $D-X$ with at most $\ell$ vertices as $(Y^2\setminus Y^1) \setminus (S^1\cup S^2)$ and the arcs from $Z_2$ to $Y_1$ are all hit by a vertex in $S^1\cap S^2$. This allows us to create an auxiliary acyclic ``state'' digraph whose vertices are valid triples and arcs are the compatible pairs of triples. The paths from $(\emptyset, V, \emptyset)$ to $(V, \emptyset, \emptyset)$ in this graph then define a solution for $(D,\ell, k)$. Note that our algorithm can be equivalently seen as a dynamic programming which computes for each valid triple $(Y,Z,S)$ a minimum size set $X$ such that $\mco(D[Y]-(X\cup S))\le \ell$.}

\subparagraph{}\eiben{
	The following lemma allows us to show that if we fix $|Y|$ in a triple $(Y,Z,S)$, then only $\bigoh(k)$ vertices of $D$ could potentially be in both $Y$ and $Z$ and all other vertices are fixed}.
\jbj{The 
	lemma is an easy consequence of the fact that every semicomplete digraph on at least $2p+2$, $p\in \mathbb{N}$, vertices has a vertex of out-degree at least $p+1$. We give the proof for the convenience of the reader.}

\begin{lemma}\label{lem:DegreeSeparation}
 Let $D=(V,A)$ be a semicomplete digraph and let $Y,Z$ be a partition of $V$ such that for every $y\in Y$ and every $z\in Z,$ we have $yz\in A$. Then for every $p\in \mathbb{N}$ (1) there are at most $2p+1$ vertices in $Y$ with $d^+_D(y)\le |Z|+p$ and (2) there are at most $2p+1$ vertices in $Z$ with $d^-_D(z)\le |Y|+p$.
\end{lemma}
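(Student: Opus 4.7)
The plan is to reduce both statements to a single counting fact about semicomplete digraphs: any semicomplete digraph on $n$ vertices has sum of out-degrees at least $\binom{n}{2}$, since each unordered pair of vertices contributes at least one arc. Hence the average out-degree is at least $(n-1)/2$, and consequently any semicomplete digraph on at least $2p+2$ vertices contains a vertex of out-degree at least $p+1$. By reversing all arcs (and noting that the reverse of a semicomplete digraph is again semicomplete), the analogous statement holds with in-degree in place of out-degree.

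For part (1), I would observe that, by the hypothesis, each vertex $y \in Y$ already contributes $|Z|$ to its out-degree via the forced arcs to $Z$. Therefore the condition $d^+_D(y) \le |Z|+p$ is equivalent to $d^+_{D[Y]}(y) \le p$. Let $Y' \subseteq Y$ be the set of vertices satisfying this. Since $D[Y']$ is an induced subdigraph of a semicomplete digraph it is itself semicomplete, and for every $y \in Y'$ one has $d^+_{D[Y']}(y) \le d^+_{D[Y]}(y) \le p$. If $|Y'| \ge 2p+2$, the cited fact would produce a vertex of $D[Y']$ with out-degree at least $p+1$, a contradiction; hence $|Y'| \le 2p+1$.

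For part (2), the symmetric argument applies: each $z \in Z$ has $|Y|$ in-neighbours already coming from $Y$, so $d^-_D(z) \le |Y|+p$ reduces to $d^-_{D[Z]}(z) \le p$. Applying the reverse-arc version of the counting fact to the induced subgraph on the set $Z'$ of such vertices yields $|Z'| \le 2p+1$ in exactly the same way. There is no real obstacle beyond this bookkeeping; the only point one must be careful with is that the bound on out-degree (respectively in-degree) is inherited by induced subdigraphs, which is immediate because deleting vertices can only remove out-neighbours (resp.\ in-neighbours).
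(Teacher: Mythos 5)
Your proof is correct and follows essentially the same route as the paper: restrict to the subset of low out-degree vertices, observe that inside that induced (still semicomplete) subdigraph all out-degrees drop to at most $p$, and derive the bound from the fact that a semicomplete digraph on $\ge 2p+2$ vertices must contain a vertex of out-degree $\ge p+1$ (equivalently, from $\binom{|Y'|}{2}\le p|Y'|$). The paper even states this same auxiliary fact just before the lemma, and likewise handles part (2) by arc reversal.
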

\begin{proof}
 We will first prove Part (1). 
	Let $Y_\le$ be the set of vertices in $Y$ with out-degree at most $|Z|+p$ in $D$. Since for every $y\in Y$ and every $z\in Z$ is $yz\in A$,
	 it follows that all vertices in $Y_\le$ have out-degree at most $p$ in $D[Y_\le]$. Hence $\sum_{y\in Y_\le} d^+_{D[Y_\le]}(y)$, \emph{i.e.}, 
	 the sum of out-degrees of vertices in $Y_\le$ in $D[Y_\le]$, is at most $p|Y_\le|$. Hence, 
	 $$\sum_{y\in Y_\le} d^+_{D[Y_\le]}(y)= |A(D[Y_\le])|\le p|Y_\le|.$$
	 Since $D$ is a semicomplete digraph, 
	 $$\frac{|Y_\le|\cdot(|Y_\le|-1)}{2}\le |A(D[Y_\le])|\le  p|Y_\le|.$$ It follows that $|Y_\le|\le 2p+1.$
	 Part (2) follows directly from Part (1) applied to a digraph $D'=(V,A')$ obtained from $D$ by reversing all the arcs i.e. $A'=\{yx\;\mid\; xy\in A \}$.
\end{proof}


Let $D=(V,A)$ be a semicomplete digraph and $t\in [n]$. 
We will call a triple $(Y,Z,S)$ $t$-{\bf valid} if
\begin{enumerate}
	\item $Y,Z$ is a partition of $V(D)$ with $|Y|=t$,
	\item\label{prop:OrderingOfYZ} $S\subseteq V(D)$ is a minimal (w.r.t. inclusion) set such that for all $y\in Y$ and $z\in Z$, if $zy\in A(D)$, then $|\{y,z\}\cap S|\ge 1$,
	\item $|S|\le k$, 
	\item for all $x\in S$, if $d^+_D(x) > n-t+k$, then $x\in Y$, 
	\item for all $x\in S$, if $d^-_D(x) > t+k$, then $x\in Z$. 
\end{enumerate}

We will say a triple $(Y,Z,S)$ is {\bf valid}, if it is $t$-valid for some $t\in \mathbb{N}$.
The following simple observation will help us bound the number of partitions $(Y,Z)$ that could lead to a $t$-valid triple $(Y,Z,S)$. 
 
\begin{lemma}\label{obs:degreePartition}
	For any $t$-valid triple $(Y,Z,S),$ all vertices $v$ with $d^+_{D}(v) > n-t+k$ are in $Y$ and all vertices $v$ with $d^-_{D}(v) > t+k$ are in $Z$. 
\end{lemma}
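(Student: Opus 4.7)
The plan is to split into two cases: vertices in $S$, which are handled directly by properties 4 and 5 of the definition of a $t$-valid triple, and vertices outside $S$, which require a short counting argument using the minimality condition (Property~\ref{prop:OrderingOfYZ}).

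Concretely, for the out-degree statement, suppose for contradiction that there is a vertex $v$ with $d^+_D(v) > n-t+k$ such that $v \in Z$. If $v \in S$, Property~4 immediately yields $v \in Y$, a contradiction. So assume $v \in Z \setminus S$. Then for every $y \in Y$ with $vy \in A(D)$, Property~\ref{prop:OrderingOfYZ} requires $|\{y,v\} \cap S| \geq 1$, and since $v \notin S$ we must have $y \in S$. Hence all out-neighbours of $v$ that lie in $Y$ belong to $S \cap Y$, which has size at most $|S| \le k$. The remaining out-neighbours of $v$ lie in $Z \setminus \{v\}$, contributing at most $|Z|-1 = n-t-1$. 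Thus
\[
 d^+_D(v) \le k + (n-t-1) = n-t+k-1,
\]
contradicting $d^+_D(v) > n-t+k$.

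The in-degree statement is entirely symmetric: if $v \in Y$ with $d^-_D(v) > t+k$, then either $v \in S$ (so Property~5 forces $v \in Z$, a contradiction) or $v \in Y \setminus S$, in which case every $z \in Z$ with $zv \in A(D)$ must satisfy $z \in S$ by minimality, so $|N^-(v) \cap Z| \le |S \cap Z| \le k$ and $|N^-(v) \cap Y| \le |Y|-1 = t-1$, giving $d^-_D(v) \le t+k-1$, again a contradiction.

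No real obstacle is anticipated: the argument is essentially a one-line double-counting once the correct side of the partition is assumed for contradiction, relying only on the fact that vertices outside $S$ cannot be incident to any $Z\to Y$ arc, and on $|S| \le k$. The only point to be careful about is to treat the $v \in S$ and $v \notin S$ cases separately so that one may invoke Properties~4 and~5 in the former and Property~\ref{prop:OrderingOfYZ} in the latter.
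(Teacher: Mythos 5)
Your proof is correct and essentially mirrors the paper's argument: both split on whether $v\in S$ (using properties~4 and~5) or $v\notin S$, and in the latter case both use the covering condition in property~2 together with the bound $|Z|+|S|\le n-t+k$ (resp.\ $|Y|+|S|\le t+k$). The only cosmetic difference is that you phrase the second case as a proof by contradiction with a degree count, while the paper argues directly that $v$ must have an out-neighbour (resp.\ in-neighbour) in $Y\setminus S$ (resp.\ $Z\setminus S$) and then invokes property~2; one small nit is that the constraint you use is the covering requirement in property~2, not its minimality.
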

\begin{proof}
	If $v\in S$, the lemma follows directly from the definition of a $t$-valid triple. 	
	If $v\in V(D)\setminus S$ and $d^+_{D}(v) > n-t+k$, then $v$ has an out-neighbour in $Y\setminus S$, because $|Z\cup S|\le n-t+k$, and $v\in Y$ follows by property~\ref{prop:OrderingOfYZ}. Similarly, if $v\in V(D)\setminus S$ and $d^-_{D}(v) > t+k$, then $v$ has an in-neighbour in $Z\setminus S$ and $v\in Z$ by property~\ref{prop:OrderingOfYZ}.
\end{proof}

\eiben{
\begin{lemma}\label{lemma:sizeOfSetF}
	Let $D=(V,A)$ be a semicomplete digraph, $n=|V|$, and let $t\in [n]$. 
	If there exists a $t$-valid triple, then there are at most $7k+2$ vertices $v$ in $V(D)$ with $d^+_{D}(v) \le n-t+k$ and $d^-_{D}(z) \le t+k$.
\end{lemma}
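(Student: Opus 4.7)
The plan is to reduce the problem to an application of Lemma~\ref{lem:DegreeSeparation} on the subdigraph $D' := D[V \setminus S]$, after observing that the valid-triple conditions give us exactly the structure that lemma requires.

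First I would fix a $t$-valid triple $(Y,Z,S)$ and set $Y' = Y \setminus S$, $Z' = Z \setminus S$. By condition~\ref{prop:OrderingOfYZ} of a $t$-valid triple, for every $y \in Y'$ and $z \in Z'$ we have $zy \notin A(D)$, and since $D$ is semicomplete, $yz \in A(D)$. Hence $(Y', Z')$ is a partition of $V(D')$ satisfying the hypothesis of Lemma~\ref{lem:DegreeSeparation}. Also $|Y'| = t - |Y \cap S|$ and $|Z'| = (n-t) - |Z \cap S|$.

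Let $F = \{v \in V(D) : d^+_D(v) \le n-t+k \text{ and } d^-_D(v) \le t+k\}$. I would bound $|F|$ by splitting it as $|F \cap Y'| + |F \cap Z'| + |F \cap S|$. For $v \in F \cap Y'$, one has $d^+_{D'}(v) \le d^+_D(v) \le n-t+k = |Z'| + |Z \cap S| + k$, so Lemma~\ref{lem:DegreeSeparation}(1) applied with $p = |Z \cap S| + k$ yields
\[
  |F \cap Y'| \le 2(|Z \cap S| + k) + 1 = 2|Z \cap S| + 2k + 1.
\]
Symmetrically, using $d^-_{D'}(v) \le d^-_D(v) \le t+k = |Y'| + |Y \cap S| + k$ for $v \in F \cap Z'$, Lemma~\ref{lem:DegreeSeparation}(2) with $p = |Y \cap S| + k$ yields $|F \cap Z'| \le 2|Y \cap S| + 2k + 1$.

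Summing and using $|F \cap S| \le |S| \le k$ together with $|Y \cap S| + |Z \cap S| = |S| \le k$, I would obtain
\[
  |F| \le (2|Z \cap S| + 2k + 1) + (2|Y \cap S| + 2k + 1) + |S| = 3|S| + 4k + 2 \le 7k + 2,
\]
which is the desired bound. The only mildly delicate point is choosing the parameter $p$ in each invocation of Lemma~\ref{lem:DegreeSeparation} so that the $|Z \cap S|$ and $|Y \cap S|$ contributions combine with $|F \cap S|$ to give exactly $3|S|$ rather than something weaker; otherwise the argument is routine.
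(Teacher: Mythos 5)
Your proof is correct and follows essentially the same route as the paper's: apply Lemma~\ref{lem:DegreeSeparation} to $D-S$ with the partition $(Y\setminus S, Z\setminus S)$ and parameters $p=|Z\cap S|+k$ and $p=|Y\cap S|+k$, then add $|F\cap S|\le |S|\le k$. The only cosmetic difference is that you carry $3|S|$ through the final arithmetic rather than bounding $|F\setminus S|\le 6k+2$ first, but the computation and the $7k+2$ bound are the same.
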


\begin{proof}
	Let us assume that there is at least one $t$-valid triple and let us denote it $(Y,Z,S)$.
	Note that for all $y\in Y\setminus S$ and $z\in Z\setminus S$ it holds that $zy\notin A(D)$. Since $D$ is a semicomplete digraph, it follows that $yz\in A(D)$. Due to Lemma~\ref{lem:DegreeSeparation} applied to $D-S,$ there are at most $2(k+|Z\cap S|)+1$ 
	vertices in $Y\setminus S$ with $d^+_{D-S}(y)\le |Z\setminus S|+k+|Z\cap S|=n-t+k$
	and there are at most $2(k+|Y\cap S|)+1$ vertices in $Z\setminus S$ with $d^-_{D-S}(z)\le |Y\setminus S|+k+|Y\cap S|=t+k$. 
	Let $F=\{v\in V(D):\ 	d^+_{D}(v)\le n-t+k \mbox{  and } d^-_{D}(v)\le t+k\}.$ By the above, 
	\begin{eqnarray*}
		|F\setminus S| & \le & 2(k+|Z\cap S|)+1 + 2(k+|Y\cap S|)+1\\
		& \le & 4k+2 +\jbj{2|S|}\le 6k+2.
	\end{eqnarray*}
	Thus, $|F|\le 7k+2.$
\end{proof}
} 

\begin{lemma}\label{lemma:numberOfTripples}
	Let $D=(V,A)$ be a semicomplete digraph, $n=|V|$, and let $t\in [n]$. 
	There are at most $2^{8k+2}$ $t$-valid triples $(Y,Z,S)$. 
Moreover, \ee{if we are given the in- and out-degrees of all vertices in $D$ on the input, then} we can enumerate all such triples in time \ee{$\bigoh(2^{8k} k n)$}. 
\end{lemma}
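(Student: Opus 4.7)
The plan is to fix $t$ and decompose the count according to first the partition $(Y,Z)$ and then the cover $S$. Set $F=\{v\in V(D):d^+_D(v)\le n-t+k \text{ and } d^-_D(v)\le t+k\}$; assuming at least one $t$-valid triple exists (otherwise the bound is trivial), Lemma~\ref{lemma:sizeOfSetF} gives $|F|\le 7k+2$. By Lemma~\ref{obs:degreePartition}, each $v\in V\setminus F$ with $d^+_D(v)>n-t+k$ is forced into $Y$ and each $v\in V\setminus F$ with $d^-_D(v)>t+k$ is forced into $Z$; no vertex can satisfy both, or else no valid triple would exist. Hence the partition $(Y,Z)$ is completely determined by $F\cap Y\subseteq F$, giving at most $2^{|F|}\le 2^{7k+2}$ candidate partitions.

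For each candidate partition, $S$ must be a minimal vertex cover of the bipartite arc set $E_{ZY}=\{zy\in A(D):z\in Z,\ y\in Y\}$ of size at most $k$; properties~4 and 5 can only reduce this count further. I would apply the classical vertex-cover branching: while an uncovered arc of $E_{ZY}$ remains, branch on placing one of its two endpoints in $S$ and decrement the remaining budget. The recursion tree has depth at most $k$ and thus at most $2^k$ leaves, and every minimal vertex cover $S$ of size at most $k$ is produced at some leaf (the branch that always chooses an endpoint of the current arc lying in $S$ reaches $S$ in exactly $|S|\le k$ steps). Multiplying, the number of $t$-valid triples is at most $2^{7k+2}\cdot 2^k=2^{8k+2}$.

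For enumeration in $O(2^{8k}kn)$ time, I would precompute $F$ and the forced side of every $v\in V\setminus F$ in $O(n)$ time, iterate over the $2^{|F|}$ subsets of $F$ while maintaining $(Y,Z)$ and $|Y|$ incrementally, and discard partitions with $|Y|\ne t$. For each surviving partition, run the branching above, maintaining for every $z\in Z\setminus S$ its out-neighbour count into $Y\setminus S$ so that an uncovered arc can be located in $O(n)$ per recursion node, with counts updated in $O(n)$ per insertion or removal. At every leaf, verify properties~4 and 5 and minimality of $S$ in $O(kn)$ time by checking for a private arc of each $v\in S$, then emit the triple in $O(n)$. The main delicate point is the $O(n)$ per-node bound for locating an uncovered arc despite $D$ possibly having $\Theta(n^2)$ arcs, which the count data structure handles; summing yields $O(2^{7k+2}\cdot 2^k\cdot kn)=O(2^{8k}kn)$ in total.
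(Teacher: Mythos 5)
Your proposal is correct and follows essentially the same two-stage decomposition as the paper: bound the candidate partitions $(Y,Z)$ by $2^{|F|}\le 2^{7k+2}$ via Lemmas~\ref{lemma:sizeOfSetF} and \ref{obs:degreePartition}, then bound the minimal vertex covers $S$ by $2^k$ via depth-$k$ branching on endpoints of uncovered arcs, and multiply. The only divergence is an implementation detail in the enumeration: the paper first applies the Buss kernelization (force degree-$>k$ vertices, then conclude at most $k^2$ edges remain) to locate arcs cheaply, while you maintain per-vertex out-neighbour counts into $Y\setminus S$ to find an uncovered arc in $O(n)$ per branching node; both yield $O(2^{8k}kn)$, and your added leaf-verification of properties~4 and~5 is actually redundant, since once the partition respects Lemma~\ref{obs:degreePartition} those properties hold for every minimal vertex cover automatically.
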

\begin{proof}
	
\eiben{Let $F=\{v\in V(D):\ 	d^+_{D}(v)\le n-t+k \mbox{  and } d^-_{D}(v)\le t+k\}.$ By Lemma~\ref{lemma:sizeOfSetF}, $|F|\le 7k+2$.}
\ee{If the out- and in-degrees of all vertices in $D$ are given on the input,} we can construct the set $F$ in time $\bigoh(n).$
	By Lemma~\ref{obs:degreePartition}, 
	there are at most $2^{7k+2}$ possible partitions $(Y',Z')$ that could lead to a $t$-valid triple $(Y',Z',S')$ for some $S'$, each such partition is uniquely determinate by fixing $Y'\cap F$. 

	For the rest of the proof, we assume that we computed the set $F$ of vertices $v$ in $V(D)$ with $d^+_{D}(v) \le n-t+k$ and $d^-_{D}(v) \le t+k$, $|F|\le 7k+2$. Let $(Y',Z')$ be one of $2^{7k+2}$ partitions that could lead to a $t$-valid triple. 
	 We show that we can enumerate all minimal sets $S'$, $|S'|\le k$, such that for all $y\in Y'$ and $z\in Z'$, if $zy\in A(D)$, then $|\{y,z\}\cap S'|\ge 1$. Let $G$ be an undirected bipartite graph such that $V(G)=V(D)$, the partite sets of $G$ are $Y'$ and $Z',$ and for every $y\in Y'$, $z\in Z',$ it holds $yz\in V(G)$ if and only if $zy\in A(D)$. Then $S'$ is a minimal vertex cover in $G$. Moreover, every minimal vertex cover $S'$ in $G$ leads to a $t$-valid triple $(Y',Z',S')$. It is well known and easy to show that we can enumerate all minimal vertex covers of size at most $k$ in $G$ in time $\bigoh(2^kk^2+kn)$. \ee{This is done by including all vertices with degree at least $k+1$ in every vertex cover. If the resulting graph has more than $k^2$ edges, then there is no vertex of size at most $k$~\cite{BussG93}. Then we can enumerate all vertex covers, by using simple search-tree algorithm that picks an edge, say $uv$, and recursively enumerate all minimal vertex covers of size at most $k-1$ that includes $u$ or $v$, respectively. Given the algorithm, it is also easy to see that} there are at most $2^k$ distinct minimal vertex covers \ee{of size at most $k$.}

It follows that there are at most $2^{7k+2}\cdot 2^k=2^{8k+2}$ $t$-valid triples and we can enumerate all of them in time $\bigoh(n+2^{8k}k^2+kn)=\bigoh(2^{8k}kn)$.
\end{proof}

We are now ready to present our algorithm. 

\begin{theorem} \label{SDfpt}
	There is an FPT algorithm that solves \DCOCk{} on semicomplete digraphs in time \ee{$\bigoh(2^{16k}kn^2)$}.
\end{theorem}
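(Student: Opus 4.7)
The plan is to reduce \DCOCk{} on a semicomplete digraph $D$ to a shortest path computation in an auxiliary weighted DAG $H$ whose vertices are all valid triples (together with a source $T_\sigma=(\emptyset,V,\emptyset)$ and a sink $T_\tau=(V,\emptyset,\emptyset)$, both of which are themselves valid). Intuitively, a valid triple $(Y,Z,S)$ represents a cut in the acyclic order of the strong components of $D-X$: $Y$ contains the ``earlier'' components together with the $X$-vertices grouped with them, while $S\subseteq X$ hits all arcs directed from $Z$ to $Y$.

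I would first enumerate all valid triples: after an $O(n^2)$ precomputation of in- and out-degrees, Lemma~\ref{lemma:numberOfTripples} applied for each $t\in[n]$ yields at most $O(2^{8k}n)$ triples in total time $O(2^{8k}kn^2)$. I would then build $H$: for valid triples $T_1=(Y_1,Z_1,S_1)$ and $T_2=(Y_2,Z_2,S_2)$ with $Y_1\subsetneq Y_2$, I add the arc $T_1\to T_2$ whenever $|(Y_2\setminus Y_1)\setminus(S_1\cup S_2)|\le\ell$ and an appropriate ``continuity'' condition holds (ensuring that $S_1$-vertices still in $Z_2$ reappear in $S_2$, and symmetrically for $S_2$-vertices in $Y_1$), with weight $w(T_1,T_2)=|(S_1\cup S_2)\cap(Y_2\setminus Y_1)|$. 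The instance is a YES-instance iff the shortest $T_\sigma$-to-$T_\tau$ path in $H$ has weight at most $k$; since $H$ is acyclic (arcs respect the strict order of $|Y|$), the shortest path is computable in time linear in $|V(H)|+|A(H)|$. For completeness, from a solution $X$ with strong components $C_1,\ldots,C_q$ of $D-X$ in acyclic order, an exchange argument (swap any ``floating'' $X$-vertex that covers no backward arc for a vertex of its adjacent component) lets me assume every $x\in X$ lies in some $S_i$; setting $Y_i=\bigcup_{j\le i}(C_j\cup X_j)$ and choosing the $S_i\subseteq X$ consistently, each $T_i$ is valid, each $T_{i-1}\to T_i$ is an arc of $H$, and the path weights add up to exactly $|X|$. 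For soundness, from a path $T_\sigma=T_0\to\cdots\to T_p=T_\tau$ of weight $W$, I set $X=\bigcup_{i\ge 1}(S_{i-1}\cup S_i)\cap(Y_i\setminus Y_{i-1})$: we have $|X|=W$ since the segments $Y_i\setminus Y_{i-1}$ partition $V$, and the $\ell$-bound on each segment together with the vertex-cover property of the $S_i$ and the continuity condition guarantee that $\mco(D-X)\le\ell$.

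The hardest step is pinning down the compatibility conditions and weights so that the correspondence between paths and solutions is \emph{exactly} tight in both directions. The ``floating'' $X$-vertex exchange crucially relies on $D$ being semicomplete, so that local vertex substitutions within a strong component preserve strong connectivity. The continuity condition on consecutive $S_i$ must be strong enough to propagate pending commitments down the path, so that any backward arc between vertices in distant segments is covered by a vertex that appears in the reconstructed $X$. For the running time, each valid triple is succinctly encoded by its intersection with the $O(k)$-size ``core'' $F$ from Lemma~\ref{lemma:sizeOfSetF} together with its set $S$ (of size at most $k$), so each pairwise compatibility check and weight computation takes $O(k)$ time; bounding the total number of triple pairs by $(2^{8k}n)^2$ then yields the claimed $O(2^{16k}kn^2)$ overall time.
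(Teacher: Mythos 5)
Your high-level plan matches the paper's: enumerate valid triples, build a weighted acyclic state DAG on them, and reduce \DCOCk{} to a shortest path computation from $(\emptyset,V,\emptyset)$ to $(V,\emptyset,\emptyset)$. The encoding via the degree-bounded core $F$ and the $\bigoh(k)$ per-arc compatibility check also line up with the paper's implementation and running-time analysis. However, there is a genuine gap in your completeness argument, concentrated in the ``floating vertex'' exchange.

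The gap is this: your arc condition requires $|(Y_2\setminus Y_1)\setminus(S_1\cup S_2)|\le\ell$ as a hard constraint, and your weight $|(S_1\cup S_2)\cap(Y_2\setminus Y_1)|$ only charges for vertices of $X$ that appear in some $S_i$. But a minimum solution $X$ may contain vertices that are deleted solely to shrink a strong component down to size $\ell$ and do not cover any backward arc between any $(Y_i,Z_i)$ cut; since the $S_i$ are required to be \emph{minimal} vertex covers, such a vertex $x$ lies in no $S_i$. In its segment $Y_i\setminus Y_{i-1}$ we then have $|(Y_i\setminus Y_{i-1})\setminus(S_{i-1}\cup S_i)|\ge |C_i|+1=\ell+1$ (the equality $|C_i|=\ell$ being forced by minimality of $X$), so your arc simply does not exist and the solution is unrepresentable in your DAG. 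The paper avoids this by not imposing the hard $\le\ell$ bound: it allows the segment to overflow and instead charges the excess, $\max(0,|Z_1\cap Y_2\setminus(S_1\cup S_2)|-\ell)$, as an additive term in the arc weight (on top of $|S_1\setminus S_2|$), so floating vertices are counted without having to belong to any $S_i$.

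Your proposed fix, ``swap any floating $X$-vertex that covers no backward arc for a vertex of its adjacent component,'' does not repair this. If you swap such an $x$ for a vertex $c\in C_i$, then $c$ enters $X$ but, being interior to the strong component $C_i$ of $D-X$, has no backward arcs to non-$X$ vertices either (any backward arc through $c$ in $D$ has its other endpoint in $X$, else $c$ would not have been inside a single strong component of $D-X$). So $c$ is again floating; the exchange just renames the problem. Semicompleteness does not rescue this: the obstruction is structural (a vertex deleted purely for a size reason need not hit any cut), not about local substitutability within a strong component. To make the reduction correct you either need the paper's excess-charging weight with the matching ``$\le k$'' arc filter, or a substantially stronger structural lemma asserting that some minimum solution has no floating vertices under some valid assignment of $X$-vertices to segments --- and the latter is not argued and is not obviously true.

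One secondary point: even granting the exchange, your weight $|(S_1\cup S_2)\cap(Y_2\setminus Y_1)|$ agrees with the paper's $|S_1\setminus S_2|$ only because the continuity conditions force every $x\in\bigcup_j S_j$ to lie in $S_{i-1}\cup S_i$ when $x\in Y_i\setminus Y_{i-1}$; you should make that implication explicit, since otherwise ``the segments partition $V$'' alone does not justify $|X|=W$.
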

\begin{proof}
	\eiben{Let $D= (V,A)$ be a semicomplete digraph and let $(D,\ell, k)$ be an instance of \DCOCk{}.}
	\subparagraph{Algorithm.} Our algorithm boils down to finding a shortest path in an auxiliary weighted \jbj{acyclic} digraph \jbj{whose vertex set consists} of all the valid triples. \eiben{The main idea is to find a sequence of valid triples $(Y_1,Z_1,S_1),\ldots, (Y_q,Z_q, S_q)$ such that $S=\bigcup_{i\in [q]}S_i$ is a solution for $(D,\ell, k)$ and the strongly connected components of} \viol{$D-X$} \eiben{are subsets of $C_i=Y_{i+1}\setminus (Y_i\cup S)$, where $|C_i|\le \ell$ and for all $i<j$, $x_i\in C_i$, $x_j\in C_j$ it holds that $x_jx_i\notin A$.}
	
	We define \jbj{the} weighted directed \jbj{acyclic} state graph $\mathcal{D}=(\mathcal{V}, \mathcal{A})$ as follows.
	 The set of vertices $\mathcal{V}$ is the set of all $t$-valid triples for all $t\in \{0,1,\dots ,n\}$. 
	 The set of arcs $\mathcal{A}$ contains an arc from a \jbj{$t_1$}-valid triple \jbj{$(Y_1,Z_1, S_1)$} to a \jbj{$t_2$}-valid triple \jbj{$(Y_2,Z_2, S_2)$} if and only if the following conditions holds: 
         \jbj{
         \begin{itemize}
		\item $Y_1\subset Y_2$ (and $Z_2\subseteq Z_1$), 
		\item if $x\in S_1\cap Z_1$ and $x\in Z_2$, then $x\in S_2$,
		\item if $x\in Y_1\setminus S_1$, then $x\in Y_2\setminus S_2$\ee{, and
		\item $|S_1\setminus S_2|+ \max(0, |Z_1\cap Y_2\setminus (S_1\cup S_2)| -\ell) \le k$}.
                \end{itemize}
                }
	We let the weight of an arc from \jbj{$(Y_1,Z_1, S_1)$ to $(Y_2,Z_2, S_2)$} be \jbj{$$|S_1\setminus S_2|+ \max(0, |Z_1\cap Y_2\setminus (S_1\cup S_2)| -\ell).$$} \eiben{This finishes the description of the auxiliary weighted acyclic digraph. In the remainder of the proof we first show that $(D,\ell,k)$ is YES-instance if and only if the cost of the shortest path in $\mathcal{D}$ from $(\emptyset,V(D),\emptyset)$ to $(V(D),\emptyset,\emptyset)$ is at most $k$. 
		Afterwards,
		we bound $|\mathcal{V}|+|\mathcal{A}|$ by $\bigoh(2^{16k}n^2)$ and prove that we can construct the auxiliary digraph in $\bigoh(2^{16k}kn^2)$ time. We can then find a shortest path from $(\emptyset,V(D),\emptyset)$ to $(V(D),\emptyset,\emptyset)$ \jbj{in linear time, that is, in time $\bigoh(2^{16k}n^2)$ since $\mathcal{D}$ is acyclic (by dynamic programming using an acyclic ordering of the vertices), which finishes the proof.}}

\subparagraph*{Correctness of the Algorithm.}
	\jbj{Suppose first that $(D,\ell, k)$ is a} YES-instance of \DCOCk{} such that $D$ is \jbj{a} semicomplete digraph. Let $X$ be a minimum size solution for $(D,\ell, k)$, 
	that is, a minimum size set such that $\mco(D-X)\le \ell$. Since $(D,\ell, k)$ is YES-instance, $|X|\le k$ \jbj{and } $\mco(D-X)\le \ell$,  the vertices of $D-X$ can be partitioned in sets $C_1,\ldots, C_q$ such that 
	\begin{enumerate}
		\item for every $i\in [q]$ is  $|C_i|\le \ell$, and
		\item for every $i,j\in [q]$ with $i<j$ and every $x\in C_i$, $y\in C_j$ we have $xy\in A$ and $yx\notin A$. 
	\end{enumerate}

Our goal is to define a sequence of valid triples $(Y_i,Z_i,S_i)$, $i\in [q]$, such that the arc $((Y_i,Z_i,S_i),(Y_{i+1},Z_{i+1},S_{i+1}))$ is in $\mathcal{A}$ and the cost of the path in $\mathcal{D}$ defined by this sequence is $|X|$. 
We will \eiben{construct} these triples \eiben{from $X$ and $C_1,\ldots, C_q$ with some additional restrictions that makes it easier to show that they indeed define a path in $\mathcal{D}$ of cost at most $|X|$. Namely, we will define them} such that for all $i,j\in [q]$, $i<j$ \eiben{the triples satisfy the following properties:}
\begin{enumerate}
	\item $(Y_i,Z_i,S_i)$ is $t_i$-valid for some $t_i\in [n]$, 
	\item  $C_1\cup\cdots \cup C_i\subseteq Y_i$, 
	\item $C_{i+1}\cup\cdots \cup C_q\subseteq Z_i$, 
	\item $S_i\subseteq X$,
	\item $Y_i\subset Y_j$ and $Z_j\subseteq Z_i$, 
	\item if $x\in S_i\cap Z_i$ and $x\in Z_j$, then $x\in S_j$,
	\item if $x\in Y_i\setminus S_i$, then $x\in Y_j\setminus S_j$.
\end{enumerate} 

It is straightforward to verify that, given the above properties, the arc $$((Y_i,Z_i,S_i),(Y_{i+1},Z_{i+1},S_{i+1}))\in \mathcal{A}.$$  
We first show that a sequence with the above properties indeed exists and defer the computation of the cost of the path defined by this sequence to later.

To obtain this sequence, we need to discuss how to distribute \jbj{the vertices of }$X$ in \jbj{the sets} $Y_i$'s and $Z_i$'s and how to compute $S_i$, $S_j$ \eiben{(Note that the partition of the vertices in $V\setminus X$ is fixed by properties 2 and 3)}. 

To distribute \jbj{the vertices of} $X$ between $Y_i$ and $Z_i$, we put all $x\in X$ with $d^+_D(x) \ge n-t_i+k$ \jbj{in} $Y_i$ and all $x\in X$ with $d^-_D(x) \ge t_i+k$ \jbj{in}  $Z_i$. The remaining vertices in $X$ we can distribute arbitrarily, we only have to make sure that for all $i,j\in [q]$, $i<j$, it holds that $Y_i\subset Y_j$ and $Z_j\subseteq Z_i$. Now $|X|\le k$ and for all $y\in Y_i\setminus X = C_1\cup\cdots \cup C_i$ and $z\in Z_i\setminus X= C_{i+1}\cup\cdots \cup C_q$ we have $zy\notin A(D)$. The set $S_i$ is defined \jbj{to be those} vertices $x\in X$ such that one of the following holds: 
\begin{enumerate}
	\item $x\in Y_i$ and there exists $z\in Z_i\setminus X$ such that $zx\in A(D)$, 
	\item $x\in Z_i$ and there is an arc $xy\in A(D)$, $y\in Y_i$ such that $y\notin S_i$.
\end{enumerate}

Note that all arcs from $Z_i$ to $Y_i$ are covered by $S_i$ and for each $x\in X$ there is an  arc $zy$ \jbj{from $Z_i$ to $Y_i$} with $\{y,z\}\cap X= \{x\}$. 
Note that if $x\in Y_i\setminus S_i$, then $x\in Y_j\setminus S_j$ for all $j>i$. On the other hand, if $x\in Z_i\cap S_i$, then there is $y\in Y_i\setminus S_i$ such that $xy\in A(D)$. Moreover, for all $j>i$, $y\in Y_j\setminus S_j$. Therefore, if $x\in Z_j$, then $x\in S_j$. From the above two properties it follows that if $x\in S_i\setminus S_j$, then $x\notin S_{j+1}\cup\cdots\cup S_q$. This finishes the \eiben{proof of the existence of a sequence of valid triples $(Y_1, Z_1, S_1),\ldots, (Y_q, Z_q, S_q)$ with properties 1-7.}

We claim that the cost of path following this sequence is \ee{$|X|\le k$}.
First note that if $x\in S_i\setminus S_{i+1}$, then $x\in Y_{i+1}$ and for all $j\ge i+1$ it holds $x\notin S_j$, hence every vertex in $X$ is counted in at most one of the sets $S_{i}\setminus S_{i+1}$. Now the set $C_i$ is precisely $(Z_{i-1}\cap Y_i)\setminus X$. If $x\in Z_{i-1}\cap Y_i\cap X$ is in some set $S_j$, then
\eiben{from the properties 5,  6 and 7 of the sequence of triples it follows that}
 $x$ is in $S_{i-1}\cup S_i$. Hence $|Z_{i-1}\cap Y_i\setminus (S_{i-1}\cup S_i)| -|C_i|)$ is precisely the number of vertices in $X$ that are in $Z_{i-1}\cap Y_i$ and in none of the sets $S_j$, $j\in [q]$. Note that for such vertex \eiben{$x\in (Z_{i-1}\cap Y_i)\setminus\bigcup_{j\in [q]} S_j$} and \eiben{a vertex} $y\in Y_{j}\setminus S_j$, \eiben{for some $j\in [q]$ with $j<i$}, \eiben{it holds} $xy\notin A(D)$ \eiben{(else by definition of a valid triple $|\{x,y\}\cap S_j|\ge 1$)}. Similarly for $z\in Z_j\setminus S_j$, $j>i$, $zx\notin A(D)$. Hence, if $|C_i|<\ell$, then $X\setminus\{x\}$ \jbj{would be} a smaller solution \jbj{for the instance} $(D,\ell, k)$ and because of minimality of $X$,  $(|Z_{i-1}\cap Y_i\setminus (S_{i-1}\cup S_i)| -\ell)$ is precisely the number of vertices in $X$ that are in $Z_{i-1}\cap Y_i$ and in none of the sets $S_j$. It follows that each vertex in $X$ is counted on precisely one arc on the path and the shortest path
\jbj{from $(\emptyset{},V(D),\emptyset)$ to $(V(D),\emptyset{},\emptyset{})$} in $\mathcal{D}=(\mathcal{V}, \mathcal{A})$ has length precisely $|X|$.




For the other direction, let \jbj{some} shortest path in $\mathcal{D}$ from $(\emptyset,V(D),\emptyset)$ to $(V(D),\emptyset,\emptyset)$ be defined by the sequence $(Y_i,Z_i,S_i)$, $i\in \{0,\ldots,q\}$, and assume that the cost of the path is at most $k$. 
\eiben{For every $i\in [q]$, let $T_i$}
be \jbj{an arbitrary set consisting of } $(|(Z_{i-1}\cap Y_i)\setminus (S_{i-1}\cup S_i)| -\ell)$ vertices \jbj{from}  $Z_{i-1}\cap Y_i\setminus (S_{i-1}\cup S_i)$ and  let $X=\bigcup_{i\in [q]}(T_i\cup S_i)$.  
Because  the pair $((Y_{i-1},Z_{i-1},S_{i-1}),(Y_i,Z_i,S_i))$ is an arc in $\mathcal{D}$ \jbj{for every $i\in [q]$}, we have $Y_{i-1}\subseteq Y_i$ and $Z_i\subseteq Z_{i-1}$. Moreover, $(Y_{i-1},Z_{i-1},S_{i-1})$ and $(Y_i,Z_i,S_i))$ are $t_{i-1}$-valid and $t_i$-valid triples, for some $t_{i-1},t_i\in [n]$, respectively. Therefore, there is no arc from $Z_j\setminus X$ to $Y_i\setminus X$ for any $i\le j\in [q]$. It follows that each strongly connected component \eiben{of $D-X$} is a subset of \eiben{$(Z_{i-1}\cap Y_i)\setminus X$ for some $i\in [q]$. In particular note that $(Z_{i-1}\cap Y_i)\cap X = (Z_{i-1}\cap Y_i)\cap (S_{i-1}\cup S_i\cup T_i)$, $(S_{i-1}\cup S_i)\cap T_i=\emptyset$ and $T_i\subseteq (Z_{i-1}\cap Y_i)$. Hence the size of each connected component is at most $\max_{i\in [q]}|Z_{i-1}\cap Y_i)\setminus (S_{i-1}\cup S_i\cup T_i)| = \max_{i\in [q]}|\left((Z_{i-1}\cap Y_i)\setminus (S_{i-1}\cup S_i)\right)\setminus T_i| = \max_{i\in [q]}(|\left((Z_{i-1}\cap Y_i)\setminus (S_{i-1}\cup S_i)\right)| - |T_i|)\le \ell$. }
Since $S_0=S_q=\emptyset$, every vertex that appears in \eiben{$S_i$} for some $i\in [q]$ is counted in \eiben{some $|S_j\setminus S_{j+1}|$, where $j\ge i$ and every vertex that appears in $T_i$ for some $i\in [q]$ is counted in $\max(0, |Z_i\cap Y_{i+1}\setminus (S_i\cup S_{i+1})| -\ell)$ }
and the final set $X$ has at most $k$ vertices. 

\subparagraph*{Construction of the Auxiliary Weighted Digraph.}
Note that by Lemma~\ref{lemma:numberOfTripples}, $|\mathcal{V}|\le 2^{8k+2}n$ and\ee{, since we can compute the out- and in-degrees of all vertices in $D$ in time  $\bigoh(n^2)$}, we can enumerate all vertices \ee{in $\mathcal{D}$} in time $\bigoh(2^{8k}kn^2)$. \eiben{It follows that $|\mathcal{A}|\le |\mathcal{V}|^2\le 2^{16k+4}n^2$ and $|\mathcal{V}|+|\mathcal{A}|=\bigoh(2^{16k}n^2)$. It remains to show that} for a pair of triples \jbj{$(Y_1,Z_1, S_1)$ and $(Y_2,Z_2, S_2)$,} we can check whether \jbj{$((Y_1,Z_1, S_1),(Y_2,Z_2, S_2))$ }is an arc and compute its weight \ee{in $\bigoh(k)$ amortized time. First note that if $|Y_1|\ge |Y_2|$, then the arc is not there. We will only check if $((Y_1,Z_1, S_1),(Y_2,Z_2, S_2))$ is an arc if $|Y_1|<|Y_2|$. This can be done without computing the sizes of $Y_1$ and $Y_2$, respectively, if we enumerate the $t$-valid triples in $\mathcal{D}$ in levels in the order increasing $t$ \eiben{(i.e., we invoke Lemma~\ref{lemma:numberOfTripples} for $t$ only after we added all $t'$-valid triples, for all $t'<t$, to $\mathcal{V}$.)} and compute all in-neighbours of a vertex when it is added to $\mathcal{V}$. Moreover, when adding the triple $(Y,Z,S)$ in $\mathcal{V}$, we will in $\bigoh(n)$ time compute maps $\alpha_{(Y,Z,S)}: V(D)\rightarrow \{0,1\}$ such that $\alpha_{(Y,Z,S)}(x)=0$ if and only if $x\in Y$ and $\beta_{(Y,Z,S)}: V(D)\rightarrow \{0,1\}$ such that $\beta_{(Y,Z,S)}(x)=0$ if and only if $x\in S$. We also compute the set $\jbj{\Delta_{Y,Z}} = \{x\mid x\in V(D),  d^+_D(x)\le |Z|+k, d^-_D(x)\le |Y|+k\}$.} \eiben{By Lemma~\ref{lemma:sizeOfSetF}, $|\Delta_{Y,Z}|\le 7k+2$.} \ee{ Now we can describe the $\bigoh(k)$ algorithm that determines whether $((Y_1,Z_1, S_1),(Y_2,Z_2, S_2))$ is an arc. 
	
	First, for every $x\in S_1$ we can in constant time check that $x\in S_1\cap Z_1$ (\emph{i.e.,} $\alpha_{(Y_1,Z_1,S_1)}(x)=1$ and  $\beta_{(Y_1,Z_1,S_1)}(x)=0$) and $x\in Z_2$ ($\alpha_{(Y_2,Z_2,S_2)}(x)=1$ ) implies $x\in S_2$ ($\beta_{(Y_2,Z_2,S_2)}=0$). Similarly we can check in constant time that if $x\in Y_1$, then $x\in Y_2\setminus S_2$.
	
	Second, by Lemma~\ref{obs:degreePartition} and since $|Y_1|<|Y_2|$ and $|Z_1|>|Z_2|$, we get that to check that $Y_1\subset Y_2$ and $Z_2\subseteq Z_1$, we only need to check for every $x\in \jbj{\Delta_{Y_1,Z_1}\cup \Delta_{Y_2,Z_2}}$ that $\alpha_{(Y_1,Z_1,S_1)}(y)=0$ implies $\alpha_{(Y_2,Z_2,S_2)}\jbj{(x)}=0$. This check can done in $\bigoh(\jbj{|\Delta_{Y_1,Z_1}\cup \Delta_{Y_2,Z_2}|}) = \bigoh(k)$ time. 
	Finally, to compute the weight of the arc, we note that $|Z_1\cap Y_2|$ is precisely $|Y_2|-|Y_1|$, because $Y_1\subset Y_2$ and $Z_1=V(D)\setminus Y_1$, so we only need to check how many of vertices in $S_1\cup S_2$ are in $Z_1\cap Y_2$ and how many of vertices in $S_1$ are also in $S_2$. Moreover, we only need to compute $|Z_1\cap Y_2\setminus (S_1\cup S_2)| -\ell)$ if $\ell < |Y_2|-|Y_1| \le \ell + 2k$. Else either the weight of the arc is precisely $|S_1\setminus S_2|$ or it would be more than $k$ and hence it is not an arc. Hence, we end up spending $\bigoh (k+\log n)$ time on computation of weight of each of at most $\bigoh(2^{16k}kn)$ many arcs (for which $\ell < |Y_2|-|Y_1| \le \ell + 2k$ ) and $\bigoh(k)$ on all of at most $\bigoh(2^{16k}n^2)$ remaining arcs. Since $k\le n$, }
we can construct $\mathcal{D}$ in \ee{$\bigoh(2^{16k}kn^2)$} time.
\end{proof}


\ee{In the rest of the section, we will show that the dependency on both $k$ and $n$ cannot be significantly improved. More precisely, we will show an unconditional lower-bound of $\Omega(n^2)$ even if $k=0$, as we show that we need to read at least $\Omega(n^2)$ arcs of the input instance in the worst case to distinguish between $k=0$ and $k=1$. Furthermore, we show that any $2^{o(k)}n^{\bigoh(1)}$ algorithm would imply that Exponential Time Hypothesis fails. 
	\begin{theorem}\label{thm:n^2}
		There is no deterministic sequential algorithm 
		that outputs \jbj{the} correct answer for every instance $(D,\ell,0)$ of \DCOC{} \jbj{when} $D$ is a tournament in $o(n^2)$ time.
	\end{theorem}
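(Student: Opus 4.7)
The plan is an adversary (indistinguishability) argument at $\ell=1$. Let $T_n$ be the transitive tournament on vertex set $\{1,2,\ldots,n\}$ with arc set $\{(i,j):i<j\}$; every strong component of $T_n$ is a singleton, so $\mco(T_n)=1$ and $(T_n,1,0)$ is a YES-instance. For each pair $1 \le i<j \le n$, let $T_n^{ij}$ denote the tournament obtained from $T_n$ by reversing the single arc $(i,j)$ into $(j,i)$. I would first observe that the path of transitive forward arcs $i \to i+1 \to \cdots \to j$ together with the new reverse arc $j \to i$ places every vertex of $\{i,i+1,\ldots,j\}$ on a common directed cycle of $T_n^{ij}$, so $\mco(T_n^{ij}) \ge 2$ and $(T_n^{ij},1,0)$ is a NO-instance. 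Thus flipping any single arc of $T_n$ is enough to change the answer.

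Next I would run the standard adversary argument against any purported deterministic algorithm $A$ that decides $(D,1,0)$ on tournament inputs in $o(n^2)$ time. In a sequential RAM (or decision-tree) model, each step inspects $\bigoh(1)$ cells of the input, so on input $T_n$ the algorithm inspects the orientation of at most $o(n^2)$ of the $\binom{n}{2}=\Theta(n^2)$ pairs of vertices. For all sufficiently large $n$ there is therefore at least one pair $\{i,j\}$, $i<j$, whose orientation is never inspected by $A$ when it is run on $T_n$. But then $A$'s execution transcript on $T_n^{ij}$ is bit-for-bit identical to its transcript on $T_n$: every cell it inspects agrees in the two inputs, so it issues the same queries in the same order and outputs the same answer. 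This contradicts the fact that $(T_n,1,0)$ is YES while $(T_n^{ij},1,0)$ is NO.

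The only real obstacle is translating "time $o(n^2)$" into "at most $o(n^2)$ inspected input cells". This is immediate in the natural decision-tree/oracle formulation where the algorithm can query the direction of any arc in one step, and this model lower-bounds any word-RAM algorithm whose input is an adjacency matrix, so no extra machinery is required. A minor bookkeeping point is that the adversary should present the input in a way that precludes $A$ from cheaply reading large contiguous blocks (e.g., by allowing the adversary to choose the permutation of vertex labels after seeing $A$'s query sequence); this standard trick ensures the above counting argument applies in any reasonable sequential model.
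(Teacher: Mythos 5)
Your overall strategy---an adversary / indistinguishability argument that flips a single unread arc to swap a YES-instance with a NO-instance---is exactly the one the paper uses, though your concrete construction is different: you fix $\ell=1$ and base the argument on the transitive tournament $T_n$, whereas the paper builds, for every value of $\ell$, a tournament obtained by concatenating strongly connected blocks $H_\ell$ and flips an arc between two blocks. Both yield the lower bound of the theorem as stated, since the statement only asks for one bad family; the paper's version gives slightly more information, namely that the barrier persists for every fixed $\ell \in [1,n)$ rather than just $\ell=1$.

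However, your key claim---``flipping any single arc of $T_n$ is enough to change the answer''---is false. If $j=i+1$, the reversed arc $(i+1)\to i$ is the \emph{only} arc on the ``path'' $i\to i+1$, so no directed cycle is created: $T_n^{i,i+1}$ is simply $T_n$ with the labels $i$ and $i+1$ transposed, hence still a transitive (acyclic) tournament and still a YES-instance. Thus the $n-1$ consecutive pairs do not change the answer, and an algorithm could in principle probe all of them in $O(n)$ time without being fooled. The argument survives, but only because the remaining $\binom{n}{2}-(n-1)=\Theta(n^2)$ non-consecutive pairs \emph{do} flip the answer (your cycle $i\to i+1\to\cdots\to j\to i$ is genuine precisely when $j\ge i+2$), and an $o(n^2)$-time algorithm must leave at least one of those unread. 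You should restrict the adversary to non-consecutive pairs and note that their count is still $\Theta(n^2)$; as written, the proof asserts something false and the reader cannot tell whether you noticed the consecutive-pair issue.
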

\begin{proof}
		For $i\in \mathbb{N}$, let $H_i$ be an arbitrary but fixed strongly connected tournament on $i$ vertices. 
		If $\frac{n}{2} \le \ell < n$, then let us consider the graph $D$ obtained by taking
		disjoint union of $H_{\lfloor\frac{n}{2}\rfloor}$ and $H_{\lceil\frac{n}{2}\rceil}$ and orienting arcs between $H_{\lfloor\frac{n}{2}\rfloor}$ and $H_{\lceil\frac{n}{2}\rceil}$ from $H_{\lfloor\frac{n}{2}\rfloor}$ to $H_{\lceil\frac{n}{2}\rceil}$. Clearly, $\mco(D)={\lceil\frac{n}{2}\rceil}\le \ell$ and $(D,\ell, 0)$ is YES-instance of \DCOC{}. Note there are ${\lfloor\frac{n}{2}\rfloor}\cdot{\lceil\frac{n}{2}\rceil} = \Theta(n^2)$ arcs between $H_{\lfloor\frac{n}{2}\rfloor}$ and $H_{\lceil\frac{n}{2}\rceil}$. Now let $\mathbb{A}$ be a deterministic sequential algorithm that solves \DCOCk{} in $o(n^2)$ time if $k=0$. If we run $\mathbb{A}$ on $D$, then there is an arc from $H_{\lfloor\frac{n}{2}\rfloor}$ to $H_{\lceil\frac{n}{2}\rceil}$ that $\mathbb{A}$ did not read. Let this arc be $xy$ and let $D'$ be a graph obtained from $D$ by replacing the arc $xy$ by the arc $yx$. It follows that $D'$ is strongly connected and hence $(D',\ell, 0)$ is NO-instance of \DCOC{}. However, because the algorithm $\mathbb{A}$ decided that $(D,\ell,0)$ is YES-instance without considering the orientation of the arc between $x$ and $y$ on the instance $(D,\ell, 0)$ and the only difference between $(D,\ell, 0)$ and $(D',\ell, 0)$ is the orientation of the arc between $x$ and $y$, it follows that $\mathbb{A}$ outputs that $(D',\ell, 0)$ is YES-instance, which contradicts the assumption that $\mathbb{A}$ outputs correct answer for every instance $(D,\ell,0)$ of \DCOC{} such that $D$ is a tournament. 
		
		If $\ell < \frac{n}{2}$, the proof is very similar to the above, the only difference is the construction of the digraph $D$. To construct $D$ we first take the disjoint union of $q= \lfloor\frac{n}{\ell}\rfloor$ copies of $H_\ell$, \jbj{denoted} $H^1_\ell, \ldots, H^{q}_\ell$, and \jbj{one copy of} $H_{n-q\ell}$. We add the arc $xy$ to $D$ if $x\in H^i_\ell$ and $y\in H^j_\ell$ such that $1\le i<j \le q$ or if $x\in H^i_\ell$, \jbj{$i\in [q]$}, and $y\in H_{n-q\ell}$. It follows that $D$ is a tournament and $\mco(D)=\ell$, that is $(D,\ell, 0)$ is a YES-instance. Now let $Y=\bigcup_{i\in [\lfloor\frac{q}{2}\rfloor]} V( H^{i}_\ell)$ and $Z=V(D)\setminus Y$. It is easy to see that $\frac{n}{4}\le |Y|\le \frac{n}{2}$ and there are $\Theta(n^2)$ arcs from $Y$ to $Z$ in $D$. Moreover if $yz\in A(D)$ is an arc such that $y\in Y$ and $z\in Z$, then $D_{yz}=(V(D), (A(D)\setminus \{yz\})\cup \{zy\})$ contains a strongly connected components of size at least $2\ell$ that includes all vertices in $V(H^{\lfloor\frac{q}{2}\rfloor}_{\ell})\cup V(H^{\lfloor\frac{q}{2}\rfloor+1}_{\ell})$. The proof follows by analogous arguments to the case $n-\ell < \ell$, as for any algorithm $\mathbb{A}$ that solves $(D,\ell, k)$ in $o(n^2)$, there is an arc $yz$ such that $\mathbb{A}$ outputs incorrectly that $(D_{yz},\ell, k)$ is YES-instance. 
\end{proof}

Finally, we will present our \gzg{$\bigoh^*(2^{o(k)})$} lower bound result, based on the well-established Exponential Time Hypothesis (ETH).


Our result uses the fact that the classical \textsc{Vertex Cover} problem cannot be solved in subexponential time under ETH.	

\begin{theorem}[Cai and Juedes \cite{CaiJuedes03}]\label{thm:ethVC}
	There is no $2^{o(k)}\cdot |V(G)|^{\mathcal{O}(1)}$ algorithm for \textsc{Vertex Cover}, unless ETH fails.
\end{theorem}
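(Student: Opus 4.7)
The plan is to reduce from 3-SAT via the Sparsification Lemma of Impagliazzo, Paturi and Zane. The Exponential Time Hypothesis states that 3-SAT on $n$ variables has no $2^{o(n)}$-time algorithm. The Sparsification Lemma strengthens this by showing that, for every $\varepsilon > 0$, any 3-SAT instance can be reduced to a disjunction of $2^{\varepsilon n}$ sub-instances, each with at most $c_\varepsilon n$ clauses. Combining the two statements yields the version of ETH I actually want to use: 3-SAT on instances where the number of clauses $m$ is linear in the number of variables $n$ cannot be solved in time $2^{o(n+m)}$.

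Next I would invoke the classical Karp-style polynomial reduction from 3-SAT to \textsc{Vertex Cover}. Given a 3-CNF formula with $n$ variables and $m$ clauses, build a graph $G$ on $3m$ vertices consisting of one vertex-disjoint triangle per clause (one vertex per literal occurrence), together with an additional edge connecting every pair of vertices corresponding to complementary literal occurrences in different clauses. A standard argument shows the formula is satisfiable if and only if $G$ has a vertex cover of size $k := 2m$: every vertex cover must leave at most one vertex per triangle uncovered (forcing $|C|\ge 2m$), and the $m$ uncovered vertices, being pairwise non-adjacent, consistently assign one true literal to each clause. Crucially, the reduction is \emph{linear}: both $|V(G)|$ and $k$ are $\bigoh(m)$.

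To conclude, suppose for contradiction that an algorithm $\mathbb{A}$ solves \textsc{Vertex Cover} in time $2^{o(k)}\cdot |V(G)|^{\bigoh(1)}$. Running $\mathbb{A}$ on the output of the Karp reduction applied to a sparsified 3-SAT instance gives a 3-SAT algorithm of total running time $2^{o(2m)}\cdot m^{\bigoh(1)} = 2^{o(m)}$, which contradicts the strengthened form of ETH obtained from the Sparsification Lemma. The main subtlety, and the reason the Sparsification Lemma is essential, is that without controlling $m$ in terms of $n$ one could have $m \gg n$, in which case a $2^{o(k)} = 2^{o(m)}$-time algorithm for \textsc{Vertex Cover} would only imply a $2^{o(m)}$-time algorithm for 3-SAT rather than a $2^{o(n)}$-time one, and the naive ETH statement would no longer suffice to derive the contradiction.
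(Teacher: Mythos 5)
The paper does not prove this statement; it is cited directly from Cai and Juedes~\cite{CaiJuedes03}, so there is no in-paper proof to compare against. Your argument is the standard, correct textbook route (Sparsification Lemma to obtain the ``3-SAT with a linear number of clauses cannot be solved in $2^{o(n+m)}$ time'' form of ETH, then a linear-size Karp reduction from 3-SAT to \textsc{Vertex Cover} with $|V(G)|, k = \bigoh(m)$). The triangle-per-clause gadget with conflict edges is valid: every vertex cover must take at least two vertices per triangle, and the $m$ uncovered vertices form an independent set and hence a consistent choice of one true literal per clause, so $G$ has a vertex cover of size $2m$ iff the formula is satisfiable. One place to tighten the write-up: when you conclude a $2^{o(m)}$-time algorithm, you should make explicit that the $2^{\varepsilon n}$ iteration over sparsified sub-instances is already absorbed into the ``strengthened ETH'' statement, or carry the factor through and observe it is dominated since $\varepsilon$ is arbitrary; as written this bookkeeping is compressed. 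For comparison, Cai and Juedes frame the result more abstractly via linear-size (SERF-style) parameterized reductions in the Impagliazzo--Paturi SNP framework, which buys a general transfer principle for many parameterized problems at once, whereas your direct argument is more elementary and self-contained for \textsc{Vertex Cover} specifically; the underlying mechanism (sparsification plus a size-preserving reduction) is the same.
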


Given the above result by Cai and Juedes, the lower bound then directly follows from the proof of NP-harness of {\sc Directed Feedback Vertex Set} by Speckenmeyer \cite{Speckenmeyer89}. In fact, given a graph $G$, Speckenmeyer constructs in $O(|V(G)|^2)$ time a tournament $T$ with $3|V(G)|-2$ vertices such that for every $k$ the graph $G$ has a vertex cover of size at most $k$ if and only if $T$ has a directed feedback vertex set of size at most $k$ (see Theorem~6 in \cite{Speckenmeyer89}). Hence, we obtain the following:

\begin{theorem}\label{thm:o(k)}
	There is no algorithm solving \DCOCk{} on tournaments in time $2^{o(k)}n^{\bigoh(1)}$, unless ETH fails. 
\end{theorem}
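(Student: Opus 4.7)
The plan is to give a linear parameter transformation from \textsc{Vertex Cover} to \DCOCk{} on tournaments and then invoke Theorem~\ref{thm:ethVC}. The key observation is that \DCOC{} with $\ell=1$ coincides with \textsc{Directed Feedback Vertex Set} (DFVS): a set $X$ hits every strong component of size $>1$ exactly when $X$ hits every directed cycle. So any efficient reduction from \textsc{Vertex Cover} to DFVS on tournaments that preserves the parameter gives us what we need by setting $\ell=1$.

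Concretely, I would start from an instance $(G,k)$ of \textsc{Vertex Cover} and apply Speckenmeyer's reduction~\cite{Speckenmeyer89} to produce, in time $\bigoh(|V(G)|^2)$, a tournament $T$ on $3|V(G)|-2$ vertices such that $G$ has a vertex cover of size at most $k$ iff $T$ has a directed feedback vertex set of size at most $k$. I would then output $(T,1,k)$ as an instance of \DCOCk{} on tournaments. Because $\mco(T-X)\le 1$ iff $T-X$ is acyclic, $(T,1,k)$ is a yes-instance iff $T$ has a DFVS of size at most $k$, which holds iff $(G,k)$ is a yes-instance of \textsc{Vertex Cover}.

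Now suppose for contradiction that there were an algorithm $\mathbb{A}$ solving \DCOCk{} on tournaments in time $2^{o(k)}n^{\bigoh(1)}$. Running $\mathbb{A}$ on $(T,1,k)$ would take time
\[
  2^{o(k)}(3|V(G)|-2)^{\bigoh(1)} = 2^{o(k)}|V(G)|^{\bigoh(1)},
\]
yielding an algorithm of the same asymptotic complexity for \textsc{Vertex Cover}, contradicting Theorem~\ref{thm:ethVC}. The parameter $k$ is literally identical across the reduction and the vertex count blows up only by a constant factor, so no slack is lost in the $2^{o(k)}$ factor; the only thing to verify carefully is that Speckenmeyer's construction indeed preserves the parameter, which is explicitly the content of his Theorem~6. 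I do not anticipate any serious obstacle since all the ingredients are already established in the literature and we only need to combine them with the observation $\ell=1 \Rightarrow \text{DFVS}$.
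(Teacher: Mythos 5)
Your proposal is correct and follows essentially the same route as the paper: invoke Theorem~\ref{thm:ethVC}, note that \DCOC{} with $\ell=1$ is exactly {\sc Directed Feedback Vertex Set}, and apply Speckenmeyer's reduction (Theorem~6 of \cite{Speckenmeyer89}) producing a tournament on $3|V(G)|-2$ vertices with the same parameter $k$. Nothing is missing.
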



}

In Theorem~\ref{SDfpt} we saw that there is an FPT algorithm for \DCOCnell{} that runs in  $\bigoh^*(2^{16(n-\ell)})$ time,
as we may assume that $k \leq n - \ell$.
\gutin{
By the construction explained before Theorem \ref{thm:o(k)} we can replace $k$ by $n$ in $2^{o(k)}$ in Theorem \ref{thm:o(k)} and thus 
obtain a matching lower bound for the upper bound $\bigoh^*(2^{16(n-\ell)}).$

\begin{theorem}\label{thm:lb4n-ell}
   There is no $2^{o(n-\ell)} n^{\bigoh(1)}$-time algorithm for solving \DCOCnell{} on semicomplete digraphs, unless ETH fails.

\end{theorem}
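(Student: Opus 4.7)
The plan is to re-use Speckenmeyer's reduction from \textsc{Vertex Cover} to \textsc{Directed Feedback Vertex Set} on tournaments (the same construction invoked in the proof of Theorem \ref{thm:o(k)}), but with the lower bound tracked in terms of the number of vertices of the tournament rather than the solution size $k$. The key observation is that Speckenmeyer's reduction is \emph{linear in size}: from a graph $G$ on $n_G$ vertices it produces in polynomial time a tournament $T$ on $N = 3n_G - 2$ vertices such that $G$ has a vertex cover of size at most $k$ iff $T$ has a directed feedback vertex set of size at most $k$. Since \textsc{Directed Feedback Vertex Set} coincides with \DCOC{} in the case $\ell = 1$, the resulting instance $(T,1,k)$ of \DCOC{} on a semicomplete digraph satisfies $n - \ell = N - 1 = \Theta(n_G)$.

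Next, I would upgrade Theorem \ref{thm:ethVC} to a lower bound expressed in the number of vertices of the input: under ETH there is no $2^{o(n_G)} \cdot n_G^{\bigoh(1)}$-time algorithm for \textsc{Vertex Cover}. This is standard and can be justified either by noting that the Cai--Juedes construction outputs hard instances in which $k = \Theta(n_G)$ (so a $2^{o(n_G)}$-time algorithm on those instances would yield a $2^{o(k)}$-time algorithm, contradicting Theorem \ref{thm:ethVC}), or equivalently by combining the Sparsification Lemma for $3$-SAT with the textbook Karp reduction to \textsc{Vertex Cover}, which produces graphs with $n_G = \bigoh(n+m)$ and $k = \Theta(n_G)$.

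Finally, I would derive a contradiction from the two pieces above. Suppose, for contradiction, that some algorithm solves \DCOCnell{} on semicomplete digraphs in time $2^{o(n-\ell)} \cdot n^{\bigoh(1)}$. Given a \textsc{Vertex Cover} instance $(G,k)$ on $n_G$ vertices, apply Speckenmeyer's reduction to obtain the tournament $T$ on $N = 3n_G - 2$ vertices and then run the hypothetical algorithm on the instance $(T,1,k)$. The overall running time is $2^{o(N)} \cdot N^{\bigoh(1)} = 2^{o(n_G)} \cdot n_G^{\bigoh(1)}$, which contradicts the lower bound for \textsc{Vertex Cover} established in the previous step, and hence contradicts ETH.

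The main obstacle, and really the only non-mechanical step, is the second one: bridging the parameterised $2^{o(k)}$ lower bound of Theorem \ref{thm:ethVC} to the unparameterised $2^{o(n_G)}$ lower bound needed here. Once this is in place the rest of the argument is a direct reading of Speckenmeyer's construction together with the identification of \textsc{Directed Feedback Vertex Set} with \DCOC{} at $\ell = 1$.
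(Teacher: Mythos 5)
Your argument is correct and essentially the same as the paper's own (terse) justification, which reduces Theorem~\ref{thm:lb4n-ell} to Theorem~\ref{thm:o(k)} via Speckenmeyer's linear-size reduction and an implicit upgrade of the \textsc{Vertex Cover} lower bound from $2^{o(k)}$ to $2^{o(n_G)}$. You have correctly identified and filled in that implicit bridging step, and both standard justifications you offer for it (Cai--Juedes instances with $k=\Theta(n_G)$, or sparsification plus the Karp reduction) are valid.
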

}

\section{Conclusions} \label{sec:c}

Since  \DCOC{} generalizes  {\sc Directed Feedback Vertex Set}, it would likely be hard to improve our upper bound and obtain a tight lower bound for the time complexity of \DCOCellk{} on general digraphs. 
It seems easier to improve our upper and lower bounds on the time complexity of \DCOCk{} on semicomplete digraphs.  
There are several digraph classes which generalize semicomplete digraphs such as semicomplete multipartite digraphs and quasi-transitive digraphs, see e.g.  \cite{bang2018}. It'd be interesting to consider the time complexity of the problem on such digraphs.



\begin{thebibliography}{10}

\bibitem{bang2009}
J.~Bang-Jensen and G.~Gutin.
\newblock {\em {Digraphs: Theory, Algorithms and Applications}}.
\newblock Springer-Verlag, London, 2nd edition, 2009.

\bibitem{bang2018}
J.~Bang{-}Jensen and G.Z. Gutin, editors.
\newblock {\em Classes of Directed Graphs}.
\newblock Springer Monographs in Mathematics. Springer, 2018.

\bibitem{Bang-JensenT92}
J.~Bang{-}Jensen and C.~Thomassen.
\newblock A polynomial algorithm for the 2-path problem for semicomplete
  digraphs.
\newblock {\em {SIAM} J. Discrete Math.}, 5(3):366--376, 1992.
\newblock {http://dx.doi.org/10.1137/0405027}
  {\path{doi:10.1137/0405027}}.

\bibitem{BshoutyG17CIAC}
N.H. Bshouty and A.~Gabizon.
\newblock Almost optimal cover-free families.
\newblock In Dimitris Fotakis, Aris Pagourtzis, and Vangelis~Th. Paschos,
  editors, {\em Algorithms and Complexity - 10th International Conference,
  {CIAC} 2017, Athens, Greece, May 24-26, 2017, Proceedings}, volume 10236 of
  {\em Lecture Notes in Computer Science}, pages 140--151, 2017.

\bibitem{BussG93}
Jonathan~F. Buss and Judy Goldsmith.
\newblock Nondeterminism within {P}.
\newblock {\em {SIAM} J. Comput.}, 22(3):560--572, 1993.

\bibitem{CaiJuedes03}
Liming Cai and David Juedes.
\newblock On the existence of subexponential parameterized algorithms.
\newblock {\em J. Comput. System Sci.}, 67(4):789--807, 2003.

\bibitem{chenJACM55}
J.~Chen, Y.~Liu, S.~Lu, B.~O'Sullivan, and I.~Razgon.
\newblock A fixed-parameter algorithm for the directed feedback vertex set
  problem.
\newblock {\em J. Assoc. Comput. Mach.}, 55(5):21:1--21:19, 2008.

\bibitem{cygan2015}
M.~Cygan, F.V. Fomin, L.~Kowalik, D.~Lokshtanov, D.~Marx, M.~Pilipczuk,
  M.~Pilipczuk, and S.~Saurabh.
\newblock {\em Parameterized Algorithms}.
\newblock Springer, 2015.

\bibitem{DowneyF99}
R.G. Downey and M.R. Fellows.
\newblock {\em Parameterized Complexity}.
\newblock Monographs in Computer Science. Springer, 1999.
\newblock  {http://dx.doi.org/10.1007/978-1-4612-0515-9}
  {\path{doi:10.1007/978-1-4612-0515-9}}.

\bibitem{downey2013}
R.G. Downey and M.R. Fellows.
\newblock {\em Fundamentals of Parameterized Complexity}.
\newblock Springer, 2013.

\bibitem{DrangeDH16}
P.G. Drange, M.S. Dregi, and P.~van~'t Hof.
\newblock On the computational complexity of vertex integrity and component
  order connectivity.
\newblock {\em Algorithmica}, 76(4):1181--1202, 2016.

\bibitem{GokeMM20arXiv}
A.~G{\"{o}}ke, D.~Marx, and M.~Mnich.
\newblock Parameterized algorithms for generalizations of directed feedback
  vertex set.
\newblock {\em CoRR}, abs/2003.02483, 2020.
\newblock Preliminary version published in Proc. CIAC 2019.
\newblock {http://arxiv.org/abs/2003.02483} .

\bibitem{gross13}
D.~Gross, M.~Heinig, L.~Iswara, L.~W. Kazmierczak, K.~Luttrell, J.~T. Saccoman,
  and C.~Suffel.
\newblock A survey of component order connectivity models of graph theoretic
  networks.
\newblock {\em WSEAS Transactions on Mathematics}, 12:895--910, 2013.

\bibitem{ImpagliazzoRF2001}
Russell Impagliazzo, Ramamohan Paturi, and Francis Zane.
\newblock Which problems have strongly exponential complexity?
\newblock {\em J. Comput. System Sci.}, 63(4):512--530, 2001.
\newblock Special issue on FOCS 98 (Palo Alto, CA).

\bibitem{KumarL16}
M.~Kumar and D.~Lokshtanov.
\newblock A $2lk$ kernel for $l$-component order connectivity.
\newblock In J.~Guo and D.~Hermelin, editors, {\em 11th International Symposium
  on Parameterized and Exact Computation, {IPEC} 2016, August 24-26, 2016,
  Aarhus, Denmark}, volume~63 of {\em LIPIcs}, pages 20:1--20:14. Schloss
  Dagstuhl - Leibniz-Zentrum fuer Informatik, 2016.

\bibitem{Lee19}
E.~Lee.
\newblock Partitioning a graph into small pieces with applications to path
  transversal.
\newblock {\em Math. Program.}, 177(1-2):1--19, 2019.

\bibitem{Neogi+}
Rian Neogi, M.~S. Ramanujan, Saket Saurabh, and Roohani Sharma.
\newblock On the parameterized complexity of deletion to $\mathcal{H}$-free
  strong components, 2020.
\newblock To appear in Proceedings of MFCS 2020.
\newblock  {http://arxiv.org/abs/2005.01359} {\path{arXiv:2005.01359}}.

\bibitem{Speckenmeyer89}
E.~Speckenmeyer.
\newblock On feedback problems in digraphs.
\newblock In M.~Nagl, editor, {\em Graph-Theoretic Concepts in Computer
  Science, 15th International Workshop, {WG} '89, 1989, Proceedings}, volume
  411 of {\em Lecture Notes in Computer Science}, pages 218--231. Springer,
  1989.

\end{thebibliography}
\end{document}